\newtheorem{thm}{Theorem}[section]
\newtheorem{lem}[thm]{Lemma}
\newtheorem{assumption}[thm]{Assumption}
\newtheorem{definition}[thm]{Definition}
\newtheorem{example}[thm]{Example}
\newenvironment{exmp}{\begin{example}\rm}{\end{example}}
\newtheorem{remark}[thm]{Remark}
\newenvironment{rem}{\begin{remark}\rm}{\end{remark}}
\newcommand{\E}{\mathbb{E}}
\newcommand{\eps}{\varepsilon}
\title{Capacity Regions of Families of Continuous-Time Multi-User Gaussian Channels}
\author{\small \begin{tabular}{ccc}
Xianming Liu & Guangyue Han\\
The University of Hong Kong & The University of Hong Kong\\
email: mathxmliu@gmail.com & email: ghan@hku.hk\\
\end{tabular}}
\date{{\normalsize \today}}
\begin{document} \maketitle \pagestyle{plain}

\begin{abstract}
In this paper, we propose to use Brownian motions to model families of continuous-time multiuser Gaussian channels without bandwidth limit. It turns out that such a formulation allows parallel translation of many fundamental notions and techniques from the discrete-time setting to the continuous-time regime, which enables us to derive the capacity regions of a continuous-time white Gaussian multiple access channel with/without feedback, a continuous-time white Gaussian interference channel without feedback and a continuous-time white Gaussian broadcast channel without feedback. In theory, these capacity results give the fundamental transmission limit modulation/coding schemes can achieve for families of continuous-time Gaussian one-hop channels without bandwidth limit; in practice, the explicit capacity regions derived and capacity achieving modulation/coding scheme proposed may provide engineering insights on designing multi-user communication systems operating on an ultra-wideband regime.
\end{abstract}

{\bf Index Terms}: {\it infinite bandwidth capacity, ultra-wideband communication, multiuser communication, network information theory, continuous-time channel, feedback, multiple access channel, interference channel, broadcast channel}

\section{Introduction}  \label{intro}

Continuous-time channels were considered at the very inception of information theory. In his celebrated paper~\cite{sh48} birthing information theory, Shannon studied the following point-to-point continuous-time white Gaussian channels:
\begin{equation} \label{white-Gaussian-noise-channel}
Y(t)=X(t)+Z(t), \quad t \in \mathbb{R},
\end{equation}
where $X(t)$ is the channel input with average power limit $P$, $Z(t)$ is the white Gaussian noise and $Y(t)$ are the channel output. Shannon actually only considered the case that the channel has bandwidth limit $W$, namely, the channel input $X$ and the noise $Z$, and therefore the output $Y$ all have bandwidth limit $W$ (alternatively, as in ($9.54$) of~\cite{co2006}, this can be interpreted as the original channel (\ref{white-Gaussian-noise-channel}) concatenated with an ideal bandpass filter with bandwidth limit $W$).

Then, assuming $Z$ has flat power spectral density $1$, employing the celebrated sampling theorem~\cite{ny24, sh49}, the continuous-time channel (\ref{white-Gaussian-noise-channel}) can be equivalently represented by a parallel Gaussian channel:
\begin{equation} \label{sampled-white-Gaussian-noise-channel}
Y_{n}^{(W)}=X_{n}^{(W)}+Z_{n}^{(W)}, \quad n \in \mathbb{Z},
\end{equation}
where the noise process $\{Z_n^{(W)}\}$ is i.i.d. with variance $1$. Regarding the ``space'' index $n$ as time, the above parallel channel can be interpreted as a discrete-time Gaussian channel associated with the continuous-time channel (\ref{white-Gaussian-noise-channel}). It is well-known from the theory of discrete-time Gaussian channels that the capacity (per second) of the channel (\ref{sampled-white-Gaussian-noise-channel}) can be computed as
\begin{equation} \label{finite-bandwidth-capacity}
C^{(W)}=W \log \left(1+\frac{P}{2W}\right).	
\end{equation}
Intuitively, the {\it infinite bandwidth capacity} $C$ of the channel (\ref{white-Gaussian-noise-channel}), i.e., the channel capacity without bandwidth limit, can then be computed by taking the limit of the above expression as $W$ tends to infinity:
\begin{equation} \label{infinite-bandwidth-capacity}
C=\lim_{W \to \infty} C^{(W)}=P/2.
\end{equation}

Moments of reflection, however, reveals that the sampling approach as above for the channel capacity (with bandwidth limit or not) is heuristic in nature: For one thing, a bandwidth limited signal cannot be time limited, which renders it infeasible to define the data transmission rate if assuming a channel has bandwidth limit. In this regard, rigorous treatments coping with this issue and other technicalities can be found in~\cite{wy66, ga68}; see also~\cite{sl76} for a relevant in-depth discussion. Another issue is that, even disregarding the above nuisance arising from the bandwidth assumption, the sampling approach only gives a lower bound for the infinite bandwidth capacity of (\ref{white-Gaussian-noise-channel}): it shows that $P/2$ is achievable via a class of special coding schemes, but it is not clear that why transmission rate higher than $P/2$ cannot be achieved by other coding schemes. The infinite bandwidth capacity of (\ref{white-Gaussian-noise-channel}) was rigorously studied in~\cite{fo61, be62}, and a complete proof establishing $P/2$ as its de facto infinite bandwidth capacity can be found in~\cite{as63,as64}.

Extending Shannon's fundamental theorems on point-to-point communication channels to general networks with multiple sources and destinations, network information theory aims to establish the fundamental limits on information flows in networks and the optimal coding schemes that achieve these limits. The vast majority of researches on network information theory to date have been focusing on networks in discrete time. In a way, this phenomenon can find its source from Shannon's original treatment of continuous-time point-to-point channels, where such channels were examined through their associated discrete-time versions. This insightful viewpoint has exerted major influences on the bulk of the related literature on continuous-time Gaussian channels, oftentimes prompting a model shift from the continuous-time setting to the discrete-time one right from the beginning of a research attempt. 

Not surprisingly, the same sampling approach can be applied to continuous-time white Gaussian channels with multiple senders/receivers. The following three examples derive, in a heuristic way, the capacity regions of some multi-user one-hop Gaussian channels.

\begin{exmp} (see also Exercise $15.26$ in~\cite{co2006})
Consider the following continuous-time white Gaussian multiple access channel (MAC) with two senders:
\begin{equation} \label{example-mac}
Y(t)=X_1(t)+X_2(t)+Z(t), \quad t \in \mathbb{R},
\end{equation}
where $X_i$, $i=1, 2$, is the input from the $i$-th user with average power limit $P_i$. Similarly as before, consider its associated discrete-time version corresponding to bandwidth limit $W$:
$$
Y_n=X_{1, n}^{(W)}+X_{2, n}^{(W)}+Z_n^{(W)}, \quad n \in \mathbb{Z}.
$$
Then, it is well known~\cite{el11} that the outer bound on the capacity region can be computed as
$$
\left\{(R_1, R_2) \in \mathbb{R}_+^2: R_1 \leq W \log \left(1+\frac{P_1}{2 W} \right), R_2 \leq W \log \left(1+\frac{P_2}{2 W} \right)\right\},
$$
and the inner bound as
$$
\hspace{-0.6cm} \left\{(R_1, R_2) \in \mathbb{R}_+^2: R_1 \leq W \log \left(1+\frac{P_1}{2 W} \right), R_2 \leq W \log \left(1+\frac{P_2}{2 W} \right), R_1+R_2 \leq W \log \left(1+\frac{P_1+P_2}{2 W} \right)\right\}.
$$
(Here, it is known~\cite{wy74, co75} that the outer bound can be tightened to coincide with the inner bound, which, however, is not needed for this example.) It is easy to verify that the two bounds also collapse into the same region as $W$ tends to infinity:
$$
\left\{(R_1, R_2) \in \mathbb{R}_+^2: R_1 \leq P_1/2, R_2 \leq P_2/2\right\},
$$
which is expected to be the infinite bandwidth capacity region of (\ref{example-mac}). And a similar argument holds for more than two senders as well through a parallel extension.
\end{exmp}

\begin{exmp}
Consider the following white Gaussian interference channel (IC) with two senders and two receivers:
\begin{equation} \label{example-ic}
Y_1(t)=a_{1, 1} X_1(t)+a_{1, 2} X_2(t)+Z_1(t), \quad Y_2(t)=a_{2, 1} X_1(t)+a_{2, 2} X_2(t)+Z_2(t), \quad t \geq 0,
\end{equation}
where $X_i$, $i=1, 2$, is the input from the $i$-th user with average power limit $P_i$, and  $a_{ij} \in \mathbb{R}$, $i, j=1, 2$, is the channel gain from sender $j$ to receiver $i$. Similarly as before, one can consider its associated discrete-time version with bandwidth limit $W$: 
$$
Y_{1, n}=a_{1, 1} X_{1, n}^{(W)}+a_{1, 2} X_{2, n}^{(W)}+Z_{1, n}^{(W)}, \quad Y_{2, n}=a_{2, 1} X_{1, n}^{(W)}+a_{2, 2} X_{2, n}^{(W)}+Z_{2, n}^{(W)}, \quad n \in \mathbb{Z}.
$$
Then, a routine argument will yield the following outer bound:
$$
\left\{(R_1, R_2) \in \mathbb{R}_+^2: R_1 \leq W \log \left(1+\frac{a_{11}^2 P_1}{2 W} \right), R_2 \leq W \log \left(1+\frac{a_{22}^2 P_2}{2 W} \right)\right\}.
$$
And for such a channel, using the coding scheme where interferences are treated as noise, a lower bound on the capacity region can be derived~\cite{sa78, ha81} as:
$$
\left\{(R_1, R_2) \in \mathbb{R}^2: R_1 \leq W \log \left(1+\frac{a_{11}^2 P_1}{2 W+a_{12}^2 P_2} \right), R_2 \leq W \log \left(1+\frac{a_{22}^2 P_2}{2 W+ a_{21}^2 P_1}\right) \right\}.
$$
Now, letting $W$ tend to infinity, the above outer and inner bounds will collapse into the same region
$$
\left\{(R_1, R_2) \in \mathbb{R}_+^2: R_1 \leq a_{11}^2 P_1/2, R_2 \leq a_{22}^2 P_2/2\right\},
$$
which is expected to be the infinite bandwidth capacity region of (\ref{example-ic}). Moreover, it is clear that the above argument can be naturally generalized to more than two senders and two receivers.
\end{exmp}

\begin{exmp}
Consider the following white Gaussian broadcast channels (BC) with two receivers:
\begin{equation} \label{example-bc}
Y_1(t)=\sqrt{snr_1} X(t)+Z_1(t), \quad Y_2(t)=\sqrt{snr_2} X(t)+Z_2(t), \quad t \in \mathbb{R},
\end{equation}
where, $snr_i$ is the signal-to-noise ratio (SNR) in the channel for user $i$, $i=1, 2$,  and without loss of generality, we have assumed $snr_1 \geq snr_2$.
Assume the input $X$ has average power limit $P$ and the channel has bandwidth limit $W$. Then, the associated discrete-time channel can be characterized by
$$
Y_{1, n}^{(W)}=\sqrt{snr_1} X_{1, n}^{(W)}+Z_{1, n}^{(W)}, \quad Y_{2, n}^{(W)}=\sqrt{snr_2} X_{2, n}^{(W)}+Z_{2, n}^{(W)}, \quad n \in \mathbb{Z}.
$$
A routine argument shows that the capacity region (per second) of the above channel has an outer bound
$$
\left\{(R_1, R_2) \in \mathbb{R}_+^2: R_1 \leq W \log \left(1+\frac{\alpha snr_1 P}{2 W} \right), R_2 \leq W \log \left(1+\frac{(1-\alpha) snr_2 P}{\alpha snr_1 P+ 2 W} \right), \alpha \in [0, 1]\right\}.
$$
On the other hand, by the superposition coding scheme~\cite{co72, be73}, any rate pair $(R_1, R_2)$ as specified above is also achievable. Letting $W$ tend to infinity, the limiting region suggests that the infinite bandwidth capacity region of (\ref{example-bc}) should be
$$
\left\{(R_1, R_2) \in \mathbb{R}_+^2: \frac{R_1}{snr_1}+\frac{R_2}{snr_2} \leq \frac{P}{2} \right\},
$$
Moreover, it is clear that the above argument can be naturally generalized to more than two receivers.
\end{exmp}

The primary focus of this paper is the infinite bandwidth capacity regions of families of continuous-time multi-user one-hop white Gaussian channels with possible feedback, including continuous-time multi-user white Gaussian MACs, ICs and BCs. Though it is very plausible that approaches in the above three examples can be made rigorous through transplanting the ideas and techniques in~\cite{wy66, as63, as64} to the multi-user scenarios, we will directly work within the continuous-time setting, rather than recasting the problem in discrete time. More specifically, we will use Brownian motions to formulate the problems and we will give complete characterizations of the capacity regions of a number of continuous-time one-hop channels with possible feedback under the average power constraints. One of our goals in this paper is to advocate this alternative formulation, which equips us with established tools and techniques from stochastic calculus and enables us to translate some classical ideas and techniques from the discrete-time setting to the continuous-time one.

For instance, to formulate the white Gaussian channel (\ref{white-Gaussian-noise-channel}), instead of using a white Gaussian noise, we would rather follow~\cite{ih93} to use a Brownian motion and consider the following integral version of (\ref{white-Gaussian-noise-channel}):
\begin{equation} \label{Brownian-motion-channel}
Y(t)=\int_0^t X(s)ds + B(t),
\end{equation}
where, slightly abusing the notation, we still use $Y(t)$ to denote the output corresponding to the input $X(s)$, and $B(t)$ denotes the standard Brownian motion ($Z(t)$ can be viewed as a generalized derivative of $B(t)$); equivalently, the channel (\ref{Brownian-motion-channel}) can be seen as the original channel (\ref{white-Gaussian-noise-channel}) concatenated with an integrator circuit. As opposed to white Gaussian noises, which only exist as generalized functions~\cite{po94}, Brownian motions are well-defined stochastic processes and have been extensively studied in probability theory. Here we remark that, via a routine orthonormal decomposition argument, both of the two channels are equivalent to a parallel channel consisting of infinitely many Gaussian sub-channels~\cite{as65}.

An immediate and convenient consequence of such a formulation is that many notions in discrete time, including mutual information and typical sets, carry over to the continuous-time setting, which will rid us of the nuisances arising from the bandwidth limit assumption. Indeed, such a framework yields a clean and direct proof~\cite{ka71} that the capacity of (\ref{Brownian-motion-channel}) is $P/2$; moreover, as evidenced by numerous results collected in~\cite{ih93} on point-to-point Gaussian channels, the use of Brownian motions elevate the level of rigorousness of our treatment, and equip us with a wide range of established techniques and tools from stochastic calculus.

Secondly, as elaborated in Section~\ref{time-sampling}, the Brownian motion formulation will establish a natural connection between continuous-time Gaussian feedback channels and their discrete-time counterparts (cf. the relevant discussion on the origin of discrete-time Gaussian feedback channel in~\cite{yhk06}). Note that such a connection is established through time sampling, which takes advantage of the continuity of sample paths of a Brownian motion and naturally inherits time causality. On the other hand, the white Gaussian noise formulation is facing inherent difficulty as far as inheriting time causality is concerned: in converting (\ref{white-Gaussian-noise-channel}) to (\ref{sampled-white-Gaussian-noise-channel}), while $X_n^{(W)}$ are obtained as ``time'' samples of $X(t)$, $Z_n^{(W)}$ are in fact ``space'' samples of $Z(t)$, as they are merely the coefficients of the Karhunen-Loeve decomposition of $Z(t)$; see~\cite{yhk06} for an in-depth discussion on this.

Below, we summarize the results obtained in this paper. To put our results into a relevant context, we will first list some relevant results in discrete time or in the (ultra)-wideband regime.

{\bf Gaussian MACs.} When there is no feedback, the capacity region of a discrete-time memoryless MAC is relatively better understood: a single-letter characterization has been established by Ahlswede~\cite{ah73} and the capacity region of a Gaussian MAC was explicitly derived in Wyner~\cite{wy74} and Cover~\cite{co75}. On the other hand, the capacity region of MACs with feedback still demands more complete understanding, despite several decades of great effort by many authors: Cover and Leung~\cite{co81} derived an achievable region for a memoryless MAC with feedback. In~\cite{wi82}, Willems showed that Cover and Leung's region is optimal for a class of memoryless MACs with feedback where one of the inputs is a deterministic function of the output and the other input. More recently, Bross and Lapidoth~\cite{br05} improved Cover and Leung's region, and Wu {\em et al.}~\cite{wu} extended Cover and Leung's region for the case where non-causal state information is available at both senders. An interesting result has been obtained by Ozarow~\cite{oz84}, who derived the capacity region of a memoryless Gaussian MAC with two users via a modification of the Schalkwijk-Kailath scheme~\cite{sc66}. Moreover, Ozarow's result showed that in general, the capacity region for a discrete memoryless MAC is increased by feedback. The capacity region of more general MACs has also been considered; see, e.g.,~\cite{la06, ma, me06, ve93, ch93, kr98, kr03, ha03, ha98, pe09} and references therein. Unfortunately, none of the above-mentioned work gives an explicit characterization of the capacity region of a generic multiple access channel with feedback, which is widely believed to be highly intractable.

In Section~\ref{MAC}, we derive the infinite bandwidth capacity region of a continuous-time white Gaussian MAC with $m$ senders and with/without feedback. It turns out that for such a channel, the feedback does not increase the capacity region. \hfill $\blacksquare$

{\bf Gaussian ICs.} The capacity regions of discrete-time Gaussian ICs are largely unknown except for certain special scenarios: The capacity region of Gaussian ICs with strong interference has been established in Sato~\cite{sa78}, Han and Kobayashi~\cite{ha81}. The sum-capacity of Gaussian ICs with weak interference has been simultaneously derived in~\cite{sh09, an09, mo09}. The half-bit theorem on the tightness of the Han-Kobayashi bound~\cite{ha81} was proven in~\cite{et08}. The approximation of the Gaussian IC by the $q$-ary expansion deterministic channel was first proposed by Avestimehr, Diggavi, and Tse~\cite{av11}. Note that all the above-mentioned work deal with ICs with two pairs of senders and receivers. For more than two user pairs, special classes of Gaussian ICs have been examined using the scheme of interference alignment; see an extensive list of references in~\cite{el11}.

In Section~\ref{IC}, we derive the infinite bandwidth capacity region of a continuous-time white Gaussian IC with $m$ pairs of senders and receivers and without feedback. \hfill $\blacksquare$

{\bf Gaussian BCs.} The capacity regions of discrete-time Gaussian BCs without feedback are well known~\cite{co72, be73}. And it has been shown by El Gamal~\cite{el81} that feedback cannot increase the capacity region of a physically degraded Gaussian BC. On the other hand, it was shown by Ozarow and Leung~\cite{ozle84} that feedback can increase the capacity of stochastically degraded Gaussian BCs, whose capacity regions are far less understood.

In Section~\ref{BC}, we derive the infinite bandwidth capacity region of a continuous-time BC with $m$ receivers and without feedback. \hfill $\blacksquare$

{\bf Wideband Multi-User Gaussian Channels.} The literature on multi-user Gaussian networks operating at the (ultra-)wideband regime is vast, and much work has been focused on the interplay between bandwidth, power and transmission rate. Of greater relevance to this work are~\cite{mc87, ve90, la03, ve02a}, where, among many other results, the asymptotic behavior of multi-user Gaussian broadcast channels, as the bandwidth tends to infinity (or equivalently, as the power tends to zero), is discussed in great depth. 

Note that our results are coding theorems when the channel bandwidth ``is at infinity'', rather than ``tends to infinity'', which is one of the key differences between the above-mentioned work and ours (the difference is subtle and even somewhat confusing as the same notion ``infinite bandwidth'' was at times interpreted differently). In practice, the explicit capacity regions derived and capacity achieving modulation/coding scheme proposed may provide engineering insights on designing multiuser communication systems operating on an (ultra-)wideband regime.

Here we remark that for continuous-time channels, there are alternative definitions of the capacity regions. Counterpart results for all the aforementioned channels under these alternative definitions of the capacity regions are given in Section~\ref{Repeated-Channel}.

\section{Sampling Theorems for Infinite Bandwidth Gaussian Channels} \label{time-sampling}

In this section, we will prove sampling theorems for infinite bandwidth continuous-time white Gaussian feedback channels with possible feedback, which naturally connects such channels with their discrete-time versions.

We first establish a sampling theorem for the following continuous-time white Gaussian feedback channel:
\begin{equation} \label{before-sampling-with-feedback}
Y(t)=\int_0^t X(s, M, Y_0^{s}) ds + B(t), \quad t \in [0, T],
\end{equation}
where $X$ is the channel input, which depends on $M$, the message uniformly distributed over a finite alphabet $\mathcal{M}$, and $Y_0^{s}$, the channel output up to time $s$, and is continuous in $s$ for any fixed realizations of $M$ and $Y_0^s$. Note that, strictly speaking, the third parameter of $X$ in (\ref{before-sampling-with-feedback}) should be $Y_0^{s-}$, which, however, can be equivalently replaced by $Y_0^s$ due to the continuity of sample paths of $\{Y(t)\}$. 

Assume that the following regularity conditions are satisfied:
\begin{itemize}
\item[(a)] The solution $\{Y(t)\}$ to the stochastic differential equation (\ref{before-sampling-with-feedback}) uniquely exists;
\item[(b)] Novikov's condition~\cite{ok95} on $X(s, M, Y_0^s)$, that is,
\begin{equation}  \label{Novikov-Condition}
\E\left[\exp\left(\frac{1}{2} \int_0^T X^2(s, M, Y_0^s) ds\right)\right] < \infty.
\end{equation}
\end{itemize}

Now, for any $n \in \mathbb{N}$, choose $t_{n, 0}, t_{n, 1}, \ldots, t_{n, n}$ such that
$$
0=t_{n, 0} < t_{n, 1} < \cdots < t_{n, n-1} < t_{n, n}=T,
$$
and let $\Delta_n \triangleq \{t_{n, 0}, t_{n, 1}, \ldots, t_{n, n}\}$. Sampling the channel over the time interval $[0, T]$ with respect to $\Delta_n$, we then obtain a {\em sampled version} of the continuous-time channel (\ref{before-sampling-with-feedback}):
\begin{equation}  \label{after-sampling-with-feedback}
Y(t_{n, i})=\int_0^{t_{n, i}} X(s, M, Y_0^{s-}) ds + B(t_{n, i}), \quad i=0, 1, \ldots, n.
\end{equation} 
The sequence $\{\Delta_n\}$ is said to be {\em increasingly refined} if $\Delta_n \subset \Delta_{n+1}$ for any $n \in \mathbb{N}$. Roughly speaking, the following sampling theorem states that for any sequence of increasingly refined samplings, the mutual information of the sampled channel (\ref{after-sampling-with-feedback}) will converge to that of the original channel (\ref{before-sampling-with-feedback}).
\begin{thm} \label{sampling-output}
For any increasingly refined $\{\Delta_n\}$, we have
$$
\lim_{n \to \infty} I(M; Y_{\Delta_n})=I(M; Y_0^T),
$$
where $Y_{\Delta_n} \triangleq \{Y_{t_{n, 0}}, Y_{t_{n, 1}}, \ldots, Y_{t_{n, n}}\}$.
\end{thm}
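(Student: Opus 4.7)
The plan is to reduce everything to Lévy's upward martingale convergence theorem by exploiting two structural features: $M$ takes values in a finite set, and $Y$ has continuous sample paths. Since $M$ is uniform on the finite alphabet $\mathcal{M}$, both quantities admit the decompositions
\begin{equation*}
I(M; Y_{\Delta_n}) = H(M) - H(M \mid Y_{\Delta_n}), \qquad I(M; Y_0^T) = H(M) - H(M \mid Y_0^T),
\end{equation*}
so the theorem reduces to $H(M \mid Y_{\Delta_n}) \to H(M \mid Y_0^T)$. The essential geometric input will be the $\sigma$-algebra identity $\bigvee_n \sigma(Y_{\Delta_n}) = \sigma(Y_0^T)$, where $D \triangleq \bigcup_n \Delta_n$ is taken to be dense in $[0,T]$ (the sense in which $\{\Delta_n\}$ is ``increasingly refined'' for the conclusion to hold).

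First I would establish sample-path continuity of $\{Y(t)\}$. The Novikov condition (\ref{Novikov-Condition}) implies $\E\bigl[\int_0^T X^2(s, M, Y_0^s)\,ds\bigr] < \infty$, hence $\int_0^T X^2\,ds < \infty$ a.s., and by Cauchy--Schwarz $\int_0^T |X(s, M, Y_0^s)|\,ds < \infty$ a.s. The drift $t \mapsto \int_0^t X\,ds$ is therefore absolutely continuous, and adding the continuous Brownian path $B(t)$ yields continuous sample paths for $Y$. Density of $D$ together with continuity then gives $Y_t = \lim_{s \to t,\, s \in D} Y_s$ a.s., whence $\sigma(Y_t : t \in D) = \sigma(Y_0^T)$; the nested structure $\Delta_n \subset \Delta_{n+1}$ promotes this to $\sigma(Y_{\Delta_n}) \uparrow \sigma(Y_0^T)$.

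Writing $\mathcal{G}_n \triangleq \sigma(Y_{\Delta_n})$ and $\mathcal{G}_\infty \triangleq \sigma(Y_0^T)$, Lévy's theorem yields, for each $m \in \mathcal{M}$,
\begin{equation*}
P(M = m \mid \mathcal{G}_n) \longrightarrow P(M = m \mid \mathcal{G}_\infty) \quad \text{a.s.\ and in } L^1.
\end{equation*}
The entropy functional $\phi(p) = -\sum_{m \in \mathcal{M}} p_m \log p_m$ is continuous on the probability simplex over $\mathcal{M}$ and bounded by $\log|\mathcal{M}|$, so the bounded convergence theorem delivers
\begin{equation*}
H(M \mid \mathcal{G}_n) = \E[\phi(P(\,\cdot\, \mid \mathcal{G}_n))] \longrightarrow \E[\phi(P(\,\cdot\, \mid \mathcal{G}_\infty))] = H(M \mid \mathcal{G}_\infty),
\end{equation*}
which is the desired convergence.

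The main obstacle is really the $\sigma$-algebra identity $\bigvee_n \sigma(Y_{\Delta_n}) = \sigma(Y_0^T)$; this is precisely where the Brownian-motion formulation (as opposed to a white-noise formulation) pays dividends, since $Y$ inherits genuine continuous sample paths to which dense sampling is faithful. An alternative route via Girsanov's theorem and a likelihood-ratio expression for $I(M;Y_0^T)$ (e.g.\ Duncan's formula, also enabled by Novikov's condition) is available but heavier, and would ultimately rest on the same path-continuity fact; the martingale route above seems the most economical.
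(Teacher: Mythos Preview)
Your proof is correct and takes a genuinely different, more elementary route than the paper. The paper proceeds via Girsanov's theorem: it expresses both $I(M;Y_{\Delta_n})$ and $I(M;Y_0^T)$ as expectations of log Radon--Nikodym derivatives with respect to Wiener measure, shows these derivatives form martingales indexed by $n$, and then passes to the limit using a generalized dominated convergence theorem (with a dominating sequence built from Jensen's inequality and the bound $\log x \le x$). You instead exploit the finiteness of $\mathcal{M}$ to write $I(M;\,\cdot\,) = H(M) - H(M\mid\,\cdot\,)$, apply L\'evy's upward theorem directly to the posteriors $P(M=m\mid\mathcal{G}_n)$, and finish with bounded convergence since the entropy functional is bounded by $\log|\mathcal{M}|$ on the simplex. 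Both arguments ultimately rest on the same $\sigma$-algebra identity $\bigvee_n \sigma(Y_{\Delta_n}) = \sigma(Y_0^T)$; you make this explicit (and correctly flag that it needs $D=\bigcup_n\Delta_n$ dense in $[0,T]$), whereas the paper invokes it tacitly when identifying the martingale limit with the full-path derivative. Your route is shorter and avoids the Girsanov machinery entirely; the paper's route would survive if $M$ were not finitely supported, but that generality is not needed under the stated hypotheses. Amusingly, the ``heavier alternative'' you sketch in your closing paragraph is precisely the argument the paper gives.
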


\begin{proof}
First of all, it follows from Theorem $7.1$ of~\cite{li01} (it can be checked that its assumptions are implied by Condition (b)) that for any $m \in \mathcal{M}$, $\mu_Y \sim \mu_{Y|M=m} \sim \mu_B$, where ``$\sim$'' means ``equivalent'', and moreover,
$$
\frac{d\mu_{Y|M}}{d\mu_B}(Y_0^T)=\frac{1}{\E[e^{-\int_0^T X dY+1/2 \int_0^T X^2 ds}|Y_0^T, M]}, \quad \frac{d\mu_{Y}}{d\mu_B}(Y_0^T)=\frac{1}{\E[e^{-\int_0^T X dY+1/2 \int_0^T X^2 ds}|Y_0^T]}.
$$
Here we remark that $\E[e^{-\int_0^T X dY+1/2 \int_0^T X^2 ds}|Y_0^T, M]$ is in fact equal to $e^{-\int_0^T X dY+1/2 \int_0^T X^2 ds}$, but we keep it the way it is as above for later comparison. A parallel argument as in the proof of Theorem $7.1$ further implies that, for any $\Delta_n$,
$$
\frac{d\mu_{Y|M}}{d\mu_B}(Y_{\Delta_n})=\frac{1}{\E[e^{-\int_0^T X dY+1/2 \int_0^T X^2 ds}|Y_{\Delta_n}, M]}, \quad \frac{d\mu_{Y}}{d\mu_B}(Y_{\Delta_n})=\frac{1}{\E[e^{-\int_0^T X dY+1/2 \int_0^T X^2 ds}|Y_{\Delta_n}]},
$$
where, similarly as before, we have defined (recall that $Y_{\Delta_n} \triangleq \{Y_{t_{n, 0}}, \ldots, Y_{t_{n, n}}\}$)
$$
\quad B_{\Delta_n} \triangleq \{B_{t_{n, 0}}, \ldots, B_{t_{n, n}}\}, $$ 
and moreover,
$$
\frac{d\mu_{Y|M}}{d\mu_B}(Y_{\Delta_n}) \triangleq \frac{d\mu_{Y_{\Delta_n}|M}}{d\mu_{B_{\Delta_n}}}(Y_{\Delta_n}), \quad \frac{d\mu_{Y}}{d\mu_B}(Y_{\Delta_n}) \triangleq
\frac{d\mu_{Y_{\Delta_n}}}{d\mu_{B_{\Delta_n}}}(Y_{\Delta_n}).
$$
Then, by definition, we have
$$
I(M; Y_{\Delta_n}) = \E\left[\log \frac{d\mu_{Y|M}}{d\mu_B}(Y_{\Delta_n})\right]-\E\left[\log \frac{d\mu_{Y}}{d\mu_B}(Y_{\Delta_n})\right].
$$
Notice that Novikov's condition implies that $e^{-\int_0^T X dY+1/2 \int_0^T X^2 ds}$ integrable, which further implies that $\{\frac{d\mu_{Y|M}}{d\mu_B}(Y_{\Delta_n})\}$ and $\{\frac{d\mu_{Y}}{d\mu_B}(Y_{\Delta_n})\}$ are both martingales, and therefore,
$$
\frac{d\mu_{Y|M}}{d\mu_B}(Y_{\Delta_n}) \to \frac{d\mu_{Y|M}}{d\mu_B}(Y_0^T), \quad \frac{d\mu_{Y}}{d\mu_B}(Y_{\Delta_n}) \to \frac{d\mu_{Y}}{d\mu_B}(Y_0^T), \mbox{ a.s.}
$$ 

Now, by Jensen's inequality, we have
$$
\E\left[ \left. -\int^T_0 X dY_s+\frac{1}{2}\int^T_0
X^2ds \right| Y_{\Delta_n}, M \right] \leq \log\E[e^{-\int^T_0 X
dY_s+\frac{1}{2}\int^T_0 X^2ds}|Y_{\Delta_n}, M],
$$
and, by the easy fact that $\log x \leq x$ for any $x > 0$,
$$
\log\E[e^{-\int^T_0 X dY_s+\frac{1}{2}\int^T_0 X^2ds}|Y_{\Delta_n}, M] \leq
\E[e^{-\int^T_0 X dY_s+\frac{1}{2}\int^T_0
X^2ds}|Y_{\Delta_n}, M],
$$
It then follows that
$$
\hspace{-1.7cm} \left|\log\E[e^{-\int^T_0 X dY_s+\frac{1}{2}\int^T_0 X^2ds}|Y_{\Delta_n}, M]\right| \leq
\left|\E\left[\left. -\int^T_0 X dY_s+\frac{1}{2}\int^T_0 X^2 ds \right|Y_{\Delta_n}, M\right]\right|+\E[e^{-\int^T_0 X dY_s+\frac{1}{2}\int^T_0 X^2ds}|Y_{\Delta_n}, M].
$$
Applying the generalized dominated convergence theorem (see, e.g., Theorem $19$ on Page $89$ of~\cite{ro10}), we then have
$$
\lim_{n \to \infty} \E\left[\log \frac{d\mu_{Y|M}}{d\mu_B}(Y_{\Delta_n})\right] = \E [\log  \E[e^{-\int^T_0 X dY_s+\frac{1}{2}\int^T_0 X^2ds}|Y^{T}_{0}, M]] = \E\left[\log \frac{d\mu_{Y|M}}{d\mu_B}(Y_{0}^{T})\right].
$$
A completely parallel argument yields that
$$
\lim_{n \to \infty} \E\left[\log \frac{d\mu_{Y}}{d\mu_B}(Y_{\Delta_n})\right] = \E[\log  \E[e^{-\int^T_0 X dY_s+\frac{1}{2}\int^T_0 X^2ds}|Y^{T}_{0}]] = \E\left[\log \frac{d\mu_{Y}}{d\mu_B}(Y_{0}^{T})\right].
$$
So, with the fact
$$
I(M; Y_0^T)=\E\left[\log \frac{d\mu_{Y|M}}{d\mu_B}(Y_0^T)\right]-\E\left[\log \frac{d\mu_{Y}}{d\mu_B}(Y_0^T)\right],
$$
we conclude that 
$$
\lim_{n \to \infty} I(M; Y_{\Delta_n})=\E\left[\log \frac{d\mu_{Y|M}}{d\mu_B}(Y_0^T)\right]-\E\left[\log \frac{d\mu_{Y}}{d\mu_B}(Y_0^T)\right]=I(M;Y_0^T).
$$
\end{proof}

Next, we will establish a sampling theorem for the following continuous-time white Gaussian channel without feedback:
\begin{equation} \label{before-sampling-without-feedback}
Y(t)=\int_0^t X(s, M) ds + B(t), \quad t \in [0, T],
\end{equation}
where $X$ is the channel input, which depends on $M$, the message uniformly distributed over a finite alphabet $\mathcal{M}$, and is continuous in $s$ for any fixed realization of $M$, and satisfies the following power constraint:
$$
\int_0^T \E^2[X(s, M)] ds < \infty.
$$ 
Letting $\tilde{X}(t)=\int_0^t X(s, M) ds$ and sampling the channel (\ref{before-sampling-without-feedback}) with respect to $\Delta_n$, we have the following sampled channel:
\begin{equation} \label{after-sampling-without-feedback}
Y(t_{n, i})=\tilde{X}(t_{n, i}) + B(t_{n, i}), \quad i=1, 2, \ldots, n.
\end{equation}
It turns out that when the feedback is absent, we have the following sampling theorem, in which, as opposed to Theorem~\ref{sampling-output}, both the input and output are sampled.

\begin{thm} \label{sampling-input-output}
For any increasingly refined $\{\Delta_n\}$, we have
$$
\lim_{n \to \infty} I(\tilde{X}_{\Delta_n}; Y_{\Delta_n})=I(X_0^T; Y_0^T),
$$
where $\tilde{X}_{\Delta_n} \triangleq \{\tilde{X}_{t_{n, 0}}, \tilde{X}_{t_{n, 1}}, \ldots, \tilde{X}_{t_{n, n}}\}$ and $Y_{\Delta_n} \triangleq \{Y_{t_{n, 0}}, Y_{t_{n, 1}}, \ldots, Y_{t_{n, n}}\}$.
\end{thm}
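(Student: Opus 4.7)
My plan is to reduce Theorem~\ref{sampling-input-output} to Theorem~\ref{sampling-output}. First, I verify that the regularity hypotheses of Theorem~\ref{sampling-output} hold in the present no-feedback setting: since $\mathcal{M}$ is finite and $s \mapsto X(s,m)$ is continuous on the compact interval $[0,T]$ for each $m \in \mathcal{M}$, the map $(s,m) \mapsto X(s,m)$ is bounded by some constant $K$. Hence Novikov's condition $\E[\exp(\tfrac{1}{2}\int_0^T X^2(s,M)\, ds)] \le \exp(K^2 T/2)$ is automatic, and the integral equation defining $Y$ is trivially solvable since $X$ does not depend on $Y$. Theorem~\ref{sampling-output} therefore yields $\lim_n I(M; Y_{\Delta_n}) = I(M; Y_0^T)$, and it remains to identify $I(M;\cdot)$ with the desired input--output mutual informations on both sides.

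On the sampled side I claim that $I(\tilde{X}_{\Delta_n}; Y_{\Delta_n}) = I(M; Y_{\Delta_n})$. Since $\tilde{X}_{\Delta_n}$ is a deterministic measurable function of $M$ (no feedback means $X(s,M)$, hence $\tilde{X}(t)$, depends only on $M$), one direction is immediate from the data processing inequality. For the other direction, $Y_{\Delta_n} = \tilde{X}_{\Delta_n} + B_{\Delta_n}$ with $B_{\Delta_n}$ independent of $M$, so the conditional law of $Y_{\Delta_n}$ given $\tilde{X}_{\Delta_n}$ does not depend on $M$. This gives the Markov chain $M \to \tilde{X}_{\Delta_n} \to Y_{\Delta_n}$ and hence $I(M; Y_{\Delta_n}) \le I(\tilde{X}_{\Delta_n}; Y_{\Delta_n})$. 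The same argument applied to the full continuous paths yields $I(X_0^T; Y_0^T) = I(M; Y_0^T)$: $X_0^T$ is a function of $M$, and the Markov chain $M \to X_0^T \to Y_0^T$ follows from the independence of $B$ and $M$.

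Chaining these identities with Theorem~\ref{sampling-output} gives
\[
\lim_{n \to \infty} I(\tilde{X}_{\Delta_n}; Y_{\Delta_n}) = \lim_{n \to \infty} I(M; Y_{\Delta_n}) = I(M; Y_0^T) = I(X_0^T; Y_0^T),
\]
which is the claim. The only real content beyond Theorem~\ref{sampling-output} itself is the pair of Markov-chain identifications; I expect the bookkeeping step of equating $I(\tilde{X}_{\Delta_n};Y_{\Delta_n})$ with $I(M;Y_{\Delta_n})$ on the sampled side, which depends crucially on the fact that without feedback the noise $B$ is independent of $M$, to be the most delicate point to write out carefully.
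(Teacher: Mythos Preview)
Your proof is correct but takes a genuinely different route from the paper's. The paper does \emph{not} reduce to Theorem~\ref{sampling-output}; instead it introduces an SNR-parameterized family $Y(t)=\rho\int_0^t X(s,M)\,ds+B(t)$, computes the derivative $\frac{d}{d\rho}I(\tilde{X}_{\Delta_n};Y_{\Delta_n})$ explicitly from the Gaussian transition density, and shows it converges for each $\rho$ to $\frac{d}{d\rho}I(M;Y_0^T)$ via the I--MMSE relation of Guo--Shamai--Verd\'u. The result then follows by integrating in $\rho$ under a dominated-convergence argument.

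Your reduction is more elementary: once you observe that finiteness of $\mathcal{M}$ and continuity of $s\mapsto X(s,m)$ force $X$ to be bounded (so Novikov's condition is automatic and Theorem~\ref{sampling-output} applies), the only additional ingredient is the pair of Markov identifications $I(\tilde{X}_{\Delta_n};Y_{\Delta_n})=I(M;Y_{\Delta_n})$ and $I(X_0^T;Y_0^T)=I(M;Y_0^T)$, both of which follow cleanly from data processing and the no-feedback structure $Y_{\Delta_n}=\tilde{X}_{\Delta_n}+B_{\Delta_n}$ with $B$ independent of $M$. The paper itself invokes the continuous-time identity $I(M;Y_0^T)=I(X_0^T;Y_0^T)$ as a known fact from~\cite{ih93}, so your use of it is certainly licit. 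What the paper's approach buys is a self-contained derivative computation that makes the I--MMSE connection explicit and does not lean on the martingale machinery underlying Theorem~\ref{sampling-output}; what your approach buys is economy---you avoid the $\rho$-differentiation and Riemann-sum limit computations entirely.
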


\begin{proof}
We will follow~\cite{HanSong2014} and consider a family of continuous-time channels parameterized by $\rho \geq 0$ and their sampled versions with respect to $\Delta_n$:
$$
Y(t)=\rho \int_0^t X(s, M)ds+B(t),\quad Y(t_{n, i})=\rho \int_0^{t_{n, i}} X(s, M)ds+B(t_{n, i}), \quad i=1, 2, \ldots, n,
$$
where $\rho \geq 0$. Here, we remark that, within this proof only, we have adopted the same notation in (\ref{before-sampling-without-feedback}) for the parameterized channel, which obviously depends on $\rho$. Note that if $\rho=1$, the above parameterized channel boils down to the channel (\ref{before-sampling-without-feedback}).

It suffices  to prove that for the paratermized channel, the derivative of the mutual information of the sampled versions with respect to $\rho$ converges to that of the original channel. To be more precise, we will show that for any $\rho \geq 0$,
$$
\lim_{n \rightarrow \infty} \frac{d}{d \rho}I(\tilde{X}_{\Delta_n}; Y_{\Delta_n}) =\frac{d}{d \rho} I(M; Y^T_0), 
$$
which, together with the well-known fact (see, e.g., Theorem $6.2.1$ of~\cite{ih93}) that
$$
I(M; Y_0^T)=I(X_0^T; Y_0^T)=\frac{\rho^2}{2} \int_0^T \E[X^2(s, M)]-\E[\E^2[X(s, M)|Y_0^s]] ds,
$$
and the dominated convergence theorem, will immediately imply the theorem.

Now, using the fact that
$$
f_{Y|\tilde{X}}(y_{\Delta_n}|\tilde{x}_{\Delta_n})=
\prod_{i=1}^{n} \frac{1}{\sqrt{2\pi(t_{n, i}-t_{n, i-1})}}\exp \left({-\frac{(y_{t_{n, i}}-y_{t_{n, i-1}}- \rho \int_{t_{n, i-1}}^{t_{n, i}} \tilde{x}(s) ds)^2}{2(t_{n, i}-t_{n, i-1})}} \right),
$$
we have
\begin{align*}
\frac{d}{d \rho} f_Y(Y_{\Delta_n}) &= \int \frac{d}{d
\rho}f_{Y|\tilde{X}}(Y_{\Delta_n}|\tilde{x}_{\Delta_n}) f_{\tilde{X}}(\tilde{x}_{\Delta_n}) d\tilde{x}_{\Delta_n}\\
&= \int \sum_{i=1}^n \frac{(Y_{t_{n, i}}-Y_{t_{n, i-1}}-\int_{t_{n, i-1}}^{t_{n, i}}
\rho \tilde{x}(s) ds) \int_{t_{n, i-1}}^{t_{n, i}} \rho \tilde{x}(s) ds
}{t_{n, i}-t_{n, i-1}} f_{Y|\tilde{X}}(Y_{\Delta_n}|\tilde{x}_{\Delta_n}) f_{\tilde{X}}(\tilde{x}_{\Delta_n}) d\tilde{x}_{\Delta_n}.
\end{align*}
It then follows that
\begin{align*}
\frac{d}{d\rho}I(\tilde{X}_{\Delta_n}; Y_{\Delta_n}) &= -\E\left[ \frac{1}{f(Y_{\Delta_n})}\frac{d}{d\rho}f(Y_{\Delta_n}) \right] \\
&= \sum_{i=1}^n \frac{\rho \E[(\int_{t_{n, i-1}}^{t_{n, i}} X(s, M)ds)^2]}{t_{n, i}-t_{n, i-1}}-\sum_{i=1}^n \frac{\rho \E[\E^2[\int_{t_{n, i-1}}^{t_{n, i}}
X(s, M)ds|Y_{\Delta_n}]]}{t_{n, i}-t_{n, i-1}}.
\end{align*}
Using the fact that $X(s, M)$ is continuous in $s$ and the message alphabet $\mathcal{M}$ is finite, we have, 
\begin{align*}
\lim_{n \to \infty} \sum_{i=1}^{n} \frac{\E[(\int_{t_{n, i-1}}^{t_{n, i}}
X(s, M)ds)^2]}{t_{n, i}-t_{n, i-1}} & = \lim_{n \to \infty} \sum_{i=1}^{n} \frac{\E[(\int_{t_{n, i-1}}^{t_{n, i}}
X((t_{n, i-1}+t_{n, i})/2, M)ds)^2]}{t_{n, i}-t_{n, i-1}}\\
&= \lim_{n \to \infty} \sum_{i=1}^{n} \E[
X((t_{n, i-1}+t_{n, i})/2, M)^2](t_{n, i}-t_{n, i-1})\\
&= \int_0^t \E[X^2(s, M)] ds. 
\end{align*}
In a similar fashion, we have
\begin{align*}
\lim_{n \to \infty} \sum_{i=1}^n \frac{\E[\E^2[\int_{t_{n, i-1}}^{t_{n, i}}
X(s, M)ds|Y_{\Delta_n}]]}{t_{n, i}-t_{n, i-1}} & = \lim_{n \to \infty} \sum_{i=1}^n \frac{\E[\E^2[\int_{t_{n, i-1}}^{t_{n, i}}
X((t_{n, i-1}+t_{n, i})/2, M)ds|Y_{\Delta_n}]]}{t_{n, i}-t_{n, i-1}} \\
& = \lim_{n \to \infty} \E[\sum_{i=1}^n \E^2[
X((t_{n, i-1}+t_{n, i})/2, M)(t_{n, i}-t_{n, i-1})|Y_{\Delta_n}]] \\
&=\int_0^t \E[\E^2[X(s, M)|Y_0^T]] ds,
\end{align*}
where we have used the dominated convergence theorem for conditional expectations (see Theorem $5.5.9$ of~\cite{Durrett}). Then, using the proven fact~\cite{gu05, HanSong2014} that 
$$
\frac{d}{d\rho}I(M; Y_0^T)=\rho \int_0^t \E[X^2(s, M)] ds-\rho \int_0^t \E[\E^2[X(s, M)|Y_0^T]] ds,
$$
we arrive at
$$
\lim_{n \to \infty} \frac{d}{d\rho}I(\tilde{X}_{\Delta_n}; Y_{\Delta_n}) = \frac{d}{d\rho}I(M; Y_0^T).
$$
The theorem is then proven.
\end{proof}

\begin{rem}
Taking advantage of the continuity of sample paths of a Brownian motion, Theorem~\ref{sampling-output} and~\ref{sampling-input-output} naturally connect continuous-time Gaussian feedback channels with their discrete-time counterparts. The idea can be roughly explained as follows.

Take, for example, the continuous-time Gaussian feedback channel (\ref{before-sampling-with-feedback}). Apparently, its sampled version with respect to $\Delta_n$ can be rewritten as 
\begin{equation} \label{diff-version}
Y_{t_{n, i}}-Y_{t_{n, i-1}}=\int_{t_{n, i-1}}^{t_{n, i}} X(s, M, Y_0^{s}) ds + B(t_{n, i})-B(t_{n, i-1}), \quad i=1, 2, \ldots, n,
\end{equation}
where $B(t_{n, i})-B(t_{n, i-1})$ are independent Gaussian random variables. 
Now, with certain regularity conditions imposed on the channel input $X$, the mutual information of the above sampled channel should be ``close'' to that of the following discrete-time channel:
\begin{equation} \label{approx-diff-version}
Y_{t_{n, i}}-Y_{t_{n, i-1}} = \int_{t_{n, i-1}}^{t_{n, i}} X(s, M, Y_{t_{n, 1}}^{t_{n, n-1}}) ds + B(t_{n, i})-B(t_{n, i-1}), \quad i=1, 2, \ldots, n, 
\end{equation}
Then, Theorem~\ref{sampling-output} implies that as $n$ tends to infinity, the mutual information of (\ref{approx-diff-version}) will converge to that of (\ref{diff-version}). In other words, via time sampling, Theorem~\ref{sampling-output} can be used to establish continuous-time channels as the ``limits'' of discrete-time channels. It is straightforward to see that the same connections can be established when multiple users are present in the communication systems.
\end{rem}

\section{Gaussian MACs}\label{MAC}

Consider a continuous-time white Gaussian MAC with $m$ users, which can be characterized by
\begin{equation}  \label{Equation-MAC}
Y(t)=\int_0^t X_1(s, M_1, Y_0^s) ds+\int_0^t X_2(s, M_2, Y_0^s) ds + \cdots +\int_0^t X_m(s, M_m, Y_0^s) ds+B(t), \quad t \geq 0,
\end{equation}
where $X_i$ is the continuous channel input from sender $i$, which depends on $M_i$, the message sent from sender $i$, which is independent of all messages from other senders, and possibly on the feedback $Y_0^{s}$, the channel output up to time $s$. Note that, with the presence of feedback, the existence and uniqueness of $Y$ is in fact a tricky mathematical problem, however, we will simply assume that all the inputs $X_i$ are appropriately chosen such that $Y$ uniquely exists.

For $T, R_1, \ldots, R_m, P_1, \ldots, P_m > 0$, a $(T, (e^{T R_1}, \ldots, e^{T R_m}), (P_1, \ldots, P_m))$-code for the MAC (\ref{Equation-MAC}) consists of $m$ sets of integers $\mathcal{M}_i=\{1, 2, \ldots, e^{T R_i}\}$, the {\it message alphabet} for user $i$, $i=1, 2, \ldots, m$, and $m$ {\it encoding functions}, $X_i: \mathcal{M}_i \rightarrow C[0, T]$, which satisfy the following power constraint: for any $i=1, 2, \ldots, m$, 
\begin{equation}  \label{PowerConstraint-MAC}
\frac{1}{T} \int_0^T X^2_i(s, M_i, Y_0^s) ds \leq P_i,
\end{equation}
and a {\it decoding function},
$$
g: C[0, T] \rightarrow \mathcal{M}_1 \times \mathcal{M}_2 \times \cdots \times \mathcal{M}_m.
$$
The average probability of error for the above code is defined as
$$
\hspace{-1cm} P_e^{(T)}=\frac{1}{e^{T (\sum_{i=1}^m R_i)}} \sum_{(M_1, M_2, \ldots, M_m) \in \mathcal{M}_1 \times \mathcal{M}_2 \times \cdots \times \mathcal{M}_m} P\{g(Y_0^T) \neq (M_1, M_2, \ldots, M_m)~|~(M_1, M_2, \ldots, M_m) \mbox{ sent}\}.
$$
A rate tuple $(R_1, R_2, \ldots, R_m)$ is said to be {\bf achievable} for the MAC if there exists a sequence of $(T, (e^{T R_1}, \ldots, e^{T R_m}), (P_1, \ldots, P_m))$-codes with $P_e^{(T)} \rightarrow 0$ as $T \rightarrow \infty$. The {\bf capacity region} of the MAC is the closure of the set of all the achievable $(R_1, R_2, \ldots, R_m)$ rate tuples.

The following theorem gives an explicit characterization of the capacity region.
\begin{thm}  \label{Theorem-MAC}
Whether there is feedback or not, the capacity region of the continuous-time white Gaussian MAC (\ref{Equation-MAC}) is
$$
\{(R_1, R_2, \ldots, R_m) \in \mathbb{R}_+^m: R_i \leq P_i/2, \quad i=1, 2, \ldots, m\}.
$$
\end{thm}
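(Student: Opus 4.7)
The plan is to prove matching outer and inner bounds on the rectangle $\{(R_1,\ldots,R_m):R_i\le P_i/2\}$. Because no sum-rate constraint is active, the problem decouples: for the converse it suffices to prove $R_i\le P_i/2$ separately for each $i$, and for achievability it suffices to exhibit a single scheme that simultaneously delivers $P_i/2$ to every user. That feedback makes no difference will fall out automatically, since the converse will be established allowing full feedback while the achievability will use none.

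For the converse, fix $i$ and let $M_{-i}=(M_j)_{j\neq i}$. Fano's inequality yields
$$
R_i\bigl(1-P_e^{(T)}\bigr)\le \frac{I(M_i;Y_0^T)}{T}+\frac{1}{T},
$$
and the independence of the messages gives $I(M_i;Y_0^T)\le I(M_i;Y_0^T\mid M_{-i})$. The key reduction: once $M_{-i}$ is frozen, each $\int_0^t X_j(s,M_j,Y_0^s)\,ds$ with $j\neq i$ is a causal functional of the output path, so the substitution
$$
\tilde{Y}(t)\triangleq Y(t)-\sum_{j\neq i}\int_0^t X_j(s,M_j,Y_0^s)\,ds=\int_0^t X_i(s,M_i,Y_0^s)\,ds+B(t)
$$
is well-defined and implements a pathwise bijection $Y_0^T\leftrightarrow\tilde{Y}_0^T$ given $M_{-i}$. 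Hence $I(M_i;Y_0^T\mid M_{-i})=I(M_i;\tilde{Y}_0^T\mid M_{-i})$, and for each realization of $M_{-i}$ the right-hand side is the mutual information of a single-user continuous-time white Gaussian feedback channel of power $P_i$, which is bounded above by $TP_i/2$ via the classical single-user infinite-bandwidth result in~\cite{ka71} (equivalently, via the continuous-time Duncan-type identity $I=\tfrac{1}{2}\int_0^T \E\bigl[(X_i-\E[X_i\mid\tilde{Y}_0^s,M_{-i}])^2\bigr]\,ds$ combined with the power constraint). Sending $T\to\infty$ with $P_e^{(T)}\to 0$ gives $R_i\le P_i/2$.

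For achievability I propose time-division multiple access: partition $[0,T]$ into $m$ equal sub-intervals and have user $i$ transmit only in its designated slot, using a capacity-achieving point-to-point code for the Brownian motion channel of power $mP_i$ on that slot. Within user $i$'s slot the channel is exactly a stand-alone single-user Brownian motion channel (the relevant noise increment $B(t)-B(t_{i-1})$ is again a Brownian motion), so it delivers rate arbitrarily close to $mP_i/2$ on the slot; normalized to the full block this gives rate $(1/m)\cdot(mP_i/2)=P_i/2$ to every user simultaneously. The per-realization power constraint holds with equality, the slots do not overlap so there is no inter-user interference, and the construction uses no feedback, so the same scheme handles both channel models.

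The main technical obstacle is the measurability and regularity bookkeeping in the converse reduction: one must verify that the pathwise definition of $\tilde{Y}$ is genuinely well-posed given the feedback coupling between $Y$ and the $X_j$, that the reduced single-user channel satisfies the hypotheses (for instance a Novikov-type integrability condition on $X_i$) under which the $TP_i/2$ bound is legitimate, and that this bound is a genuine per-block inequality rather than merely an asymptotic rate statement. Once those pieces are in place the computation itself is short.
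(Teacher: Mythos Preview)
Your proposal is correct. The converse is essentially identical to the paper's: both go through Fano, then $I(M_i;Y_0^T)\le I(M_i;Y_0^T\mid M_{-i})$ by independence of messages, and finally bound the conditional mutual information by $TP_i/2$ via the single-user Duncan/Kadota--Zakai--Ziv identity. The paper does not explicitly name the bijection $Y\leftrightarrow\tilde Y$; it simply applies the causal-MMSE formula to $I(M_1;Y_0^T\mid M_2)$ and observes $\hat X_2=X_2$, which is the same computation. Your concern about the well-posedness of $\tilde Y$ is legitimate but is covered by the paper's standing assumption that the SDE~(\ref{Equation-MAC}) has a unique solution.

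The achievability, however, takes a genuinely different route. The paper generates random codebooks from independent OU processes and decodes by joint typicality; the key ingredient is Lemma~\ref{independent-OUs}, which shows that by pushing the OU parameter to infinity one gets $I_T(X_i;Y\mid X_j)/T\to P_i/2$ and simultaneously $I_T(X_i;Y)/T\to P_i/2$, so each user's interference is asymptotically invisible. Your TDMA scheme is more elementary: it avoids interference altogether and reduces directly to $m$ disjoint single-user problems, with no need for Lemma~\ref{independent-OUs}, information stability, or a typicality argument. What the paper's approach buys in return is that Lemma~\ref{independent-OUs} is the workhorse for the IC and BC achievability proofs as well, and it establishes the structurally interesting fact that in infinite bandwidth the other users can simply be treated as noise without rate loss---a fact TDMA does not reveal. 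The paper itself notes (in the remarks after Theorems~\ref{Theorem-MAC} and~\ref{Theorem-BC}) that treating interference as noise, and time sharing, are alternative achievability arguments, so your route is acknowledged there as a valid shortcut.
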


In the following, we will give the proof of Theorem~\ref{Theorem-MAC}. For notational convenience only, we will assume $m=2$, the case with a generic $m$ being completely parallel.

We will need the following lemma.
\begin{lem} \label{independent-OUs}
For any $\eps > 0$, there exist two independent Ornstein-Uhlenbeck (OU) processes $\{X_i(s): s \geq 0\}$, $i=1, 2$, satisfying the following power constraint:
\begin{equation} \label{a-p-c}
\mbox{for $i=1, 2$, there exists $P_i > 0$ such that for all $t > 0$,  } \frac{1}{t} \int_0^t E^2[X_i(s)] ds = P_i,
\end{equation}
such that for all $T$,
\begin{equation} \label{comma-sum}
|I_T(X_1, X_2; Y)/T-(P_1+P_2)/2| \leq \eps,
\end{equation}
and
\begin{equation} \label{conditional}
|I_T(X_1; Y|X_2)/T-P_1/2| \leq \eps, \quad |I_T(X_2; Y|X_1)/T-P_2/2| \leq \eps,
\end{equation}
moreover,
\begin{equation} \label{treat-as-noise}
|I_T(X_1; Y)/T-P_1/2| \leq \eps, \quad |I_T(X_2; Y)/T-P_2/2| \leq \eps,
\end{equation}
where
\begin{equation}  \label{to-be-interpreted-as-channel}
Y(t)=\int_0^t X_1(s) ds + \int_0^t X_2(s) ds + B(t), \quad t \geq 0.
\end{equation}
Here (and often in other parts of the paper) the subscript $T$ means that the conditional mutual information is computed over the time period $[0, T]$.
\end{lem}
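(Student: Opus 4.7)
I would take $X_i$ to be an independent stationary scalar Ornstein-Uhlenbeck process,
\[
dX_i(s) = -\theta_i X_i(s)\,ds + \sqrt{2\theta_i P_i}\, dW_i(s), \quad X_i(0) \sim N(0, P_i),
\]
with $W_1, W_2, B$ mutually independent and $\theta_1, \theta_2 > 0$ to be taken large at the end. Stationarity gives $\E[X_i^2(s)] \equiv P_i$, so (\ref{a-p-c}) holds for free. The driving tool is the causal Duncan formula already used in the proof of Theorem~\ref{sampling-input-output}, by which each of the five mutual informations in the statement equals one half of the time integral of a (jointly Gaussian, hence deterministic) Kalman-Bucy conditional variance; for instance,
\[
I_T(X_1; Y \mid X_2) = \tfrac{1}{2} \int_0^T \E\bigl[(X_1(s) - \E[X_1(s) \mid Y_0^s, X_2])^2 \bigr]\, ds,
\]
and analogously with drift $X_1 + X_2$ in place of $X_1$ for $I_T(X_1, X_2; Y)$.

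\textbf{Conditional rate.} Conditioning on the sample path of $X_2$ reduces the channel to $Y'(t) = \int_0^t X_1(s)\,ds + B(t)$, a single-user Brownian-motion channel with OU input $X_1$ (independent of $X_2$). The scalar Kalman-Bucy variance $V_1(s)$ solves the Riccati ODE $\dot V_1 = -2\theta_1 V_1 + 2\theta_1 P_1 - V_1^2$ with $V_1(0) = P_1$, whose unique nonnegative equilibrium is
\[
\Sigma_1(\theta_1) = -\theta_1 + \sqrt{\theta_1^2 + 2\theta_1 P_1} = P_1 - \frac{P_1^2}{2\theta_1} + O(\theta_1^{-2}).
\]
Since $\dot V_1(0) = -P_1^2 < 0$ and $\Sigma_1$ is the unique stable equilibrium, $V_1(\cdot)$ decreases monotonically from $P_1$ to $\Sigma_1$, so $\Sigma_1 \le V_1(s) \le P_1$ uniformly in $s$ and $|I_T(X_1; Y \mid X_2)/T - P_1/2| \le (P_1 - \Sigma_1)/2 = O(1/\theta_1)$ uniformly in $T$; symmetrically for $X_2$, yielding (\ref{conditional}).

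\textbf{Joint and marginal rates.} For the joint rate, I would set up the 2D Kalman-Bucy filter with state $Z = (X_1, X_2)^T$, drift matrix $A = -\operatorname{diag}(\theta_1, \theta_2)$, diffusion $GG^T = 2\operatorname{diag}(\theta_1 P_1, \theta_2 P_2)$, and scalar observation $dY = v^T Z\, ds + dB$ with $v = (1,1)^T$. Its conditional covariance $V(s)$ satisfies the matrix Riccati ODE with initial condition $V(0) = \operatorname{diag}(P_1, P_2)$; writing out the three scalar algebraic Riccati equations for $V_\infty$ shows directly that $V_{11,\infty} \to P_1$, $V_{22,\infty} \to P_2$, $V_{12,\infty} \to 0$ as $\min\{\theta_1, \theta_2\} \to \infty$. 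L\"owner monotonicity of the matrix Riccati flow, applied with $V(0) \succeq V_\infty$, gives $V_\infty \preceq V(s) \preceq \operatorname{diag}(P_1, P_2)$ for every $s$, so testing against $v$ yields $v^T V_\infty v \le v^T V(s) v \le P_1 + P_2$ uniformly in $s$; integrating in $s$ bounds $|I_T(X_1, X_2; Y)/T - (P_1+P_2)/2|$ by $O(1/\min\{\theta_1, \theta_2\})$, which is (\ref{comma-sum}). Finally, (\ref{treat-as-noise}) follows from the chain rule $I(X_1; Y) = I(X_1, X_2; Y) - I(X_2; Y \mid X_1)$ and the triangle inequality applied to the two preceding bounds; choosing $\theta_1, \theta_2$ large enough produces all three inequalities simultaneously for the given $\eps$.

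\textbf{Main obstacle.} The only nonroutine point is that the estimates must hold for every $T$, not merely asymptotically. This is secured by the L\"owner monotonicity of the matrix Riccati flow when started at the stationary prior $\operatorname{diag}(P_1, P_2)$, which sandwiches $V(s)$ between the stabilizing algebraic Riccati solution $V_\infty$ and $V(0)$; the asymptotics $V_\infty \to \operatorname{diag}(P_1, P_2)$ as $\theta_i \to \infty$ then upgrade these finite-time Kalman bounds to the desired finite-time mutual information bounds. Some care is required to verify that the prior $\operatorname{diag}(P_1, P_2)$ indeed dominates $V_\infty$ in the semidefinite order, but this holds because $V_\infty$ is the steady-state error covariance of the filter initialized at that very prior.
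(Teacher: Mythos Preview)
Your argument is correct and close in spirit to the paper's, but for the joint term $I_T(X_1,X_2;Y)$ you do substantially more work than necessary. The paper takes both OU processes with the \emph{same} rate parameter $a$, so that $X_1+X_2$ is itself a stationary OU process with rate $a$ and variance $P_1+P_2$; since $Y$ depends on $(X_1,X_2)$ only through $X_1+X_2$, one has $I_T(X_1,X_2;Y)=I_T(X_1+X_2;Y)$, and (\ref{comma-sum}) follows from the \emph{same} scalar Kalman--Bucy/Riccati computation you already carried out for (\ref{conditional}), applied to the single OU input $X_1+X_2$. This collapses your $2\times 2$ matrix Riccati analysis and the L\"owner monotonicity argument to the scalar case you already handled; in particular the ``main obstacle'' you flag disappears. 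Your route, allowing $\theta_1\neq\theta_2$, is more general but buys nothing here since the parameters are yours to choose. The treatment of (\ref{conditional}) and the chain-rule derivation of (\ref{treat-as-noise}) match the paper's.
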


\begin{proof}
For $a > 0$, consider the following two independent OU process $X_i(t)$, $i=1, 2$, given by
$$
X_i(t)=\sqrt{2a P_i} \int_{-\infty}^t e^{-a(t-s)} dB_i(s),
$$
where $B_i$, $i=1, 2$, are independent standard Brownian motions. Obviously, for $X_i$ defined as above, (\ref{a-p-c}) is satisfied. A parallel version of the proof of Theorem $6.2.1$ of~\cite{ih93} yields that
$$
I_T(X_1, X_2; Y) = I_T(X_1+X_2; Y)= \frac{1}{2} \int_0^T E[(X_1(t)+X_2(t)-E[X_1(t)+X_2(t)|Y_0^t])^2] dt.
$$
It then follows from Theorem $6.4.1$ in~\cite{ih93} (applied to the OU process $X_1(t)+X_2(t)$) that as $a \rightarrow \infty$,
$$
I(X_1, X_2; Y)/T = I(X_1+X_2; Y)/T \rightarrow (P_1+P_2)/2,
$$
uniformly in $T$, which establishes (\ref{comma-sum}).

For $i=1, 2$, define
$$
\tilde{Y}_i(t)=\int_0^t X_i(s) ds + B(t), \quad t > 0.
$$
As in the proof of Theorem $6.4.1$ in~\cite{ih93}, we deduce that for $i=1, 2$, $I_T(X_i; \tilde{Y}_i)/T$ tend to $P_i/2$ uniformly in $T$. Now, since $X_1$ and $X_2$ are independent, we have for any fixed $T$,
$$
I_T(X_1; Y|X_2)=I_T(X_1; \tilde{Y}_1|X_2)=I_T(X_1; \tilde{Y}_1),
$$
and
$$
I_T(X_2; Y|X_1)=I_T(X_2; \tilde{Y}_2|X_1)=I_T(X_2; \tilde{Y}_2),
$$
which immediately implies (\ref{conditional}).

Now, by the chain rule of mutual information,
$$
I_T(X_1, X_2; Y)=I_T(X_1; Y)+ I_T(X_2; Y|X_1)=I_T(X_2; Y)+ I_T(X_1; Y|X_2),
$$
which, together with (\ref{comma-sum}) and (\ref{conditional}), implies (\ref{treat-as-noise}).

\end{proof}

\begin{rem} 
With $X_i$, $i=1, 2$, regarded as channel inputs, (\ref{to-be-interpreted-as-channel}) can be reinterpreted as a white Gaussian MAC. For $i \neq j$, $I(X_i; Y)$, the reliable transmission rate of  $X_i$ when $X_j$ is not known can be arbitrarily close to $I(X_i; Y|X_j)$, the reliable transmission rate of $X_i$ when $X_j$ is known. In other words, for white Gaussian MACs, knowledge about other user's inputs will not help to achieve faster transmission rate, and therefore, they can be simply treated as noises. An more intuitive explanation of this result is as follows: for the OU-process $X_i$ as specified in the proof, its power spectral density can be computed as
$$
f_i(\lambda)=\frac{2 a P_i}{2 \pi (\lambda^2+a^2)},
$$
which is ``negligible'' compared to that of the white Gaussian noise (which is the constant $1$) as $a$ tends to infinity. Lemma~\ref{independent-OUs} is a key ingredient for deriving the capacity regions of white Gaussian MACS, and, as elaborated later in the paper, those of white Gaussian ICs and BCs as well.
\end{rem}

We also need some result on the information stability of continuous-time Gaussian processes. Let $(U, V)  = \{(U(t), V(t)), t \geq 0\}$ be a continuous Gaussian system (which means $U(t), V(t)$ are pairwise Gaussian stochastic processes). Define
$$
\varphi^{(T)}(u, v)=\frac{d\mu_{UV}^{(T)}}{d\mu_U^{(T)} \times \mu_V^{(T)}}(u, v), \qquad (u, v) \in C[0, T] \times C[0, T],
$$
where $\mu_U^{(T)}$, $\mu_V^{(T)}$ and $\mu_{UV}^{(T)}$ denote the probability distributions of $U_0^T$, $V_0^T$ and their joint distribution, respectively. For any $\varepsilon > 0$, we denote by $\mathcal{T}_{\varepsilon}^{(T)}$ the $\varepsilon$-typical set:
$$
\mathcal{T}^{(T)}_{\varepsilon}=\left\{(u, v) \in C[0, T] \times C[0, T]; \frac{1}{T} |\log \varphi^{(T)}(u, v)-I_T(U, V)| \leq \varepsilon \right\}.
$$
The pair $(U, V)$ is said to be {\it information stable}~\cite{pi64} if for any $\varepsilon > 0$,
$$
\lim_{T \to \infty} \mu^{(T)}_{UV}(\mathcal{T}_{\varepsilon}) = 1.
$$

The following theorem is a rephrased version of Theorem 6.6.2. in~\cite{ih93}.
\begin{lem} \label{information-stability}
The Gaussian system $(U, V)$ is information stable provided that
$$
\lim_{T \rightarrow \infty} \frac{I_T(U; V)}{T^2}=0.
$$
\end{lem}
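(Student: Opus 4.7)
\medskip

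\noindent\textbf{Proof plan for Lemma~\ref{information-stability}.}

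The plan is to reduce the assertion to a variance estimate on $\log \varphi^{(T)}(U,V)$ and then invoke Chebyshev's inequality. Since $\E[\log \varphi^{(T)}(U,V)] = I_T(U;V)$ by definition, information stability is equivalent to showing that
\[
\frac{1}{T}\bigl(\log \varphi^{(T)}(U,V) - I_T(U;V)\bigr) \longrightarrow 0 \quad \text{in } \mu_{UV}^{(T)}\text{-probability}.
\]
By Chebyshev, it suffices to prove the sharper quantitative bound
\[
\operatorname{Var}_{\mu_{UV}^{(T)}}\!\bigl(\log \varphi^{(T)}(U,V)\bigr) \leq 2\, I_T(U;V),
\]
because then the hypothesis $I_T/T^2 \to 0$ forces $\operatorname{Var}(\log \varphi^{(T)})/T^2 \to 0$, and
$\mu_{UV}^{(T)}(\mathcal{T}_\varepsilon^{(T)}) \to 1$ follows immediately for every $\varepsilon > 0$.

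The first step is to set up a canonical-correlation decomposition of the jointly Gaussian pair $(U_0^T, V_0^T)$. Using the reproducing-kernel/Hellinger-space machinery (or, equivalently, a simultaneous Karhunen--Lo\`eve-type diagonalization of the covariance operators of $U$, $V$ and their cross-covariance over $[0,T]$), I would produce an at most countable family of canonical variables $\{(\tilde U_k, \tilde V_k)\}_{k \geq 1}$, each pair being jointly standard Gaussian with correlation $\rho_k = \rho_k^{(T)} \in [0,1)$, and with $\{\tilde U_k\}$ and $\{\tilde V_k\}$ measurable with respect to $U_0^T$ and $V_0^T$ respectively, and such that the pairs are mutually independent across $k$. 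On this canonical basis the Radon--Nikodym derivative factorizes as a product $\varphi^{(T)}(U,V) = \prod_k \varphi_k(\tilde U_k, \tilde V_k)$, and hence
\[
\log \varphi^{(T)}(U,V) = \sum_k L_k, \qquad L_k \triangleq \log \varphi_k(\tilde U_k, \tilde V_k),
\]
with the $L_k$ independent under $\mu_{UV}^{(T)}$.

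The second step is an elementary bivariate-Gaussian computation for each $L_k$. Writing out the bivariate normal density ratio gives
\[
L_k = -\tfrac{1}{2}\log(1-\rho_k^2) - \frac{\rho_k\bigl(\rho_k(\tilde U_k^2 + \tilde V_k^2) - 2\tilde U_k \tilde V_k\bigr)}{2(1-\rho_k^2)},
\]
from which $\E[L_k] = -\tfrac{1}{2}\log(1-\rho_k^2)$ and, after a direct moment calculation (e.g.\ by writing $\tilde V_k = \rho_k \tilde U_k + \sqrt{1-\rho_k^2}\,\tilde W_k$ with $\tilde W_k$ standard Gaussian independent of $\tilde U_k$), $\operatorname{Var}(L_k) = \rho_k^2$. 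Summing over $k$ yields
\[
I_T(U;V) = -\tfrac{1}{2}\sum_k \log(1-\rho_k^2), \qquad \operatorname{Var}\!\bigl(\log \varphi^{(T)}\bigr) = \sum_k \rho_k^2,
\]
and the inequality $-\log(1-x) \geq x$ for $x \in [0,1)$ gives the desired $\operatorname{Var}(\log \varphi^{(T)}) \leq 2 I_T(U;V)$.

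The main obstacle is the first step: making the canonical decomposition rigorous for arbitrary continuous-time jointly Gaussian systems, including the case where $I_T(U;V)$ may be infinite for some finite $T$ (in which case the statement is vacuous once we restrict to the regime where $I_T < \infty$) or where the spectrum $\{\rho_k\}$ accumulates at $1$. One clean way around this is to approximate $(U_0^T, V_0^T)$ by its projections onto finite-dimensional subspaces, apply the variance identity in finite dimensions where everything is elementary, and pass to the limit using the martingale convergence of the associated likelihood ratios (which is valid because $\varphi^{(T)}$ is by hypothesis a genuine Radon--Nikodym derivative whenever $I_T < \infty$). Once this approximation is in place, the rest of the argument is the Chebyshev estimate above.
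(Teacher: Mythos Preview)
Your argument is correct, and the variance identity $\operatorname{Var}(\log\varphi^{(T)})=\sum_k \rho_k^2 \le 2I_T(U;V)$ followed by Chebyshev is precisely the classical route (going back to Pinsker). Note, however, that the paper does not give its own proof of this lemma at all: it is simply quoted as ``a rephrased version of Theorem~6.6.2 in~\cite{ih93}'' and used as a black box. So there is no paper proof to compare against; what you have written is essentially the argument one finds in Ihara's book (and in Pinsker before him).

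On your acknowledged obstacle: the canonical-correlation decomposition for a jointly Gaussian pair with $I_T<\infty$ is a standard result (the condition $I_T<\infty$ is exactly equivalent to $\sum_k \rho_k^2<\infty$, which forces $\rho_k<1$ for all $k$ and makes the product $\prod_k \varphi_k$ converge). Your finite-dimensional approximation via projections and martingale convergence of the likelihood ratios is a clean way to handle it, but you could also cite it directly from Ihara (Chapter~1) or from Gelfand--Yaglom. Either way, once that step is in place, your two-line Chebyshev finish is exactly right.
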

\noindent Lemma~\ref{information-stability} will be used in the proof of Theorem~\ref{Theorem-MAC} to establish, roughly speaking, that almost all sequences are jointly typical.

We are now ready for the proof of Theorem~\ref{Theorem-MAC}
\begin{proof}[Proof of Theorem~\ref{Theorem-MAC}]

\textbf{The converse part.} In this part, we will show that for any sequence of $(T, (e^{T R_1}, e^{T R_2}), (P_1, P_2))$-codes with $P_e^{(T)} \rightarrow 0$ as $T \rightarrow \infty$, the rate pair $(R_1, R_2)$ will have to satisfy
$$
R_1 \leq P_1/2, \qquad R_2 \leq P_2/2.
$$

Fix $T$ and consider the above-mentioned $(T, (e^{T R_1}, e^{T R_2}), (P_1, P_2))$-code. By the code construction, it is possible to estimate the messages $(M_1, M_2)$ from the channel output $Y_0^T$ with a low probability of error. Hence, the conditional entropy of $(M_1, M_2)$ given $Y_0^T$ must be small; more precisely, by Fano's inequality,
$$
H(M_1,M_2|Y_0^T) \leq T(R_1+R_2)P^{(T)}_e +H(P^{(T)}_e) = T \varepsilon_T,
$$
where $\varepsilon_T \rightarrow 0$ as $T \rightarrow \infty$. Then, we have $$
H(M_1|Y^T) \leq H(M_1, M_2|Y^T) \leq T\varepsilon_T, \quad
H(M_2|Y^T) \leq H(M_1, M_2|Y^T) \leq T\varepsilon_T.
$$
Now, we can bound the rate $R_1$ as follows:
\begin{align*}
T R_1 & = H(M_1)\\
      & = I(M_1; Y_0^T)+H(M_1|Y_0^T)\\
      & \leq I(M_1; Y_0^T) + T \varepsilon_T\\
      & \leq H(M_1)-H(M_1|Y_0^T) + T \varepsilon_T \\
      & \leq H(M_1|M_2)-H(M_1|Y_0^T, M_2)+T \varepsilon_T \\
      & = I(M_1; Y_0^T|M_2) + T \varepsilon_T.
\end{align*}

Applying Theorem $6.2.1$ in~\cite{ih93}, we have
$$
I(M_1; Y_0^T|M_2)=\frac{1}{2} E\left[\int_0^T E[(X_1+X_2-\hat{X}_1-\hat{X}_2)^2|M_2] dt \right]=\frac{1}{2} \int_0^T E[(X_1+X_2-\hat{X}_1-\hat{X}_2)^2] dt,
$$
where $\hat{X}_i(t)=E[X_i(t)|Y_0^T, M_2]$, $i=1, 2$. Noticing that $X_2=\hat{X}_2$, we then have
$$
I(M_1; Y_0^T|M_2)=\frac{1}{2} \int_0^T E[(X_1-\hat{X}_1)^2] dt],
$$
which, together with (\ref{PowerConstraint-MAC}), implies that $R_1 \leq P_1/2$. A completely parallel argument will yield that  $R_2 \leq P_2/2$.

\textbf{The achievability part.} In this part, we will show that as long as $(R_1, R_2)$ satisfying
\begin{equation} \label{strictly-less-than}
0 \leq R_1 < P_1/2, \quad 0 \leq R_2 < P_2/2,
\end{equation}
we can find a sequence of $(T, (e^{T R_1}, e^{T R_2}), (P_1, P_2))$-codes with $P_e^{(T)} \rightarrow 0$ as $T \rightarrow \infty$. The argument consists of several steps as follows.

\emph{Codebook generation}: For a fixed $T > 0$ and $\varepsilon > 0$, assume that $X_1$ and $X_2$ are independent OU processes over $[0, T]$ with respective variances $P_1-\varepsilon$ and $P_2-\varepsilon$, and that $(R_1,R_2)$ satisfying (\ref{strictly-less-than}). Generate $e^{T R_1}$ independent codewords $X_{1, i}$, $i \in \{1, 2, \ldots, e^{T R_1}\}$, of length $T$, according to the distribution of $X_1$. Similarly, generate $e^{T R_2}$ independent codewords $X_{2, j}$, $j \in \{1, 2, \ldots, e^{T R_2}\}$, of length $T$, according to the distribution of $X_2$. These codewords (which may not satisfy the power constraint in (\ref{PowerConstraint-MAC})) form the codebook, which is revealed to the senders and the receiver.

\emph{Encoding:} To send message $i \in \mathcal{M}_1$, sender $1$ sends the codeword $X_{1, i}$. Similarly, to send $j \in \mathcal{M}_2$, sender $2$ sends $X_{2, j}$.

\emph{Decoding:} For any fixed $\varepsilon > 0$, let $\mathcal{T}_{\varepsilon}^{(T)}$ denote the set of {\it jointly typical} $(x_1, x_2, y)$ sequences, which is defined as follows:
$$
\mathcal{T}_{\varepsilon}^{(T)}=\{(x_1,x_2, y) \in C[0, T] \times C[0, T] \times C[0, T]:|\log
\varphi_1(x_1, x_2, y)-I_T(X_1,X_2; Y)|\leq  T \varepsilon,
$$
$$
|\log
\varphi_2(x_1, x_2, y)-I_T(X_1; X_2, Y)|\leq  T \varepsilon, |\log \varphi_3(x_1, x_2, y)-I_T(X_2; X_1,Y)|\leq  T \varepsilon \},
$$
where
$$
\varphi_1(x_1, x_2, y)=\frac{d\mu_{X_1 X_2 Y}}{d\mu_{X_1 X_2} \times \mu_{Y}}(x_1, x_2, y),
$$
$$
\varphi_2(x_1, x_2, y)=\frac{d\mu_{X_1 X_2 Y}}{d\mu_{X_1} \times \mu_{X_2 Y}}(x_1,x_2,y),
$$
$$
\varphi_3(x_1, x_2, y)=\frac{d\mu_{X_1 X_2 Y}}{d\mu_{X_2} \times \mu_{X_1 Y}}(x_1, x_2, y).
$$
Here we remark that it is easy to check that the above Randon-Nykodym derivatives are all well-defined; see, e.g., Theorem $7.7$ of~\cite{li01} for sufficient conditions for their existence. Based on the received output $y \in C[0, T]$, the receiver chooses the pair $(i, j)$ such that
$$
(x_{1, i}, x_{2, j}, y) \in \mathcal{T}_{\varepsilon}^{(T)},
$$
if such a pair $(i, j)$ exists and is unique; otherwise, an error is declared. Moreover, an error will be declared if the chosen codeword does not satisfy the power constraint in (\ref{PowerConstraint-MAC}).

\emph{Analysis of the probability of error:} Now, for fixed $T, \varepsilon > 0$, define
$$
E_{ij}=\{(X_{1, i} ,X_{2, j} ,Y) \in \mathcal{T}_{\varepsilon}^{(T)}\}.
$$
By symmetry, we assume, without loss of generality, that (1,1) was
sent. Define $\pi^{(T)}$ to be the event that
$$
\int_0^T (X_{1, 1}(t))^2 dt > P_1 T, \quad \int_0^T (X_{2, 1}(t))^2 dt > P_2 T.
$$
Then, $\hat{P}_e^{(T)}$, the error probability for the above coding scheme (where codewords violating the power constraint are allowed), can be upper bounded as follows:
$$
\hat{P}_e^{(T)} = P(\pi^{(T)} \cup E_{11}^c \bigcup \cup_{(i, j) \neq (1, 1)} E_{ij})
$$
$$
\leq P(\pi^{(T)})+P(E_{11}^c)+\sum_{i \neq 1, j =1} P(E_{i1})+\sum_{i=1, j \neq 1} P(E_{1j})+\sum_{i \neq 1, j \neq 1} P(E_{ij}).
$$
So, for any $i, j \neq 1$, we have
$$
\hat{P}^{(T)}_e \leq P(\pi^{(T)})+ P(E^c_{11})+e^{T R_1} P(E_{i1})+e^{T R_2} P(E_{1j})+ e^{T R_1 + T R_2} P(E_{ij})
$$
Using the well-known fact that an OU process is ergodic~\cite{le08}, we deduce that $P(\pi^{(T)}) \to 0$ as $T \to \infty$. And by Lemma~\ref{information-stability} and Theorem $6.2.1$ in~\cite{ih93}, we have
$$
\lim_{T \rightarrow \infty} P((X_{1, 1}, X_{2, 1}, Y) \in \mathcal{T}_{\varepsilon}^{(T)})=1 \mbox{ and thus } \lim_{T\rightarrow \infty} P(E^c_{11})=0.
$$
Now, we have for any $i \neq 1$,
\begin{align*}
P(E_{i1}) & = P((X_{1, i}, X_{2, 1}, Y) \in \mathcal{T}^{(T)}_{\varepsilon}) \\
&= \int_{(x_1, x_2, y) \in \mathcal{T}_{\varepsilon}^{(T)}} d\mu_{X_1}(x_1) d\mu_{X_2 Y}(x_2, y) \\
&= \int_{\mathcal{T}_{\varepsilon}^{(T)}}\frac{1}{\varphi_1(x_1, x_2, y)}d\mu_{X_1 X_2 Y}(x_1, x_2, y) \\
&\leq \int_{\mathcal{T}_{\varepsilon}^{(T)}}e^{-I_T(X_1; X_2, Y)+\varepsilon T}d\mu_{X_1 X_2 Y}(x_1, x_2, y) \\
&= e^{-I_T(X_1; Y|X_2)+\varepsilon T},
\end{align*}
where we have used the independence of $X_1$ and $X_2$, and the consequent fact that
$$
I_T(X_1; X_2, Y)=I_T(X_1; X_2)+I_T(X_1; Y| X_2)=I_T(X_1; Y|X_2).
$$
Similarly, we have, for $j \neq 1$,
$$
P(E_{1j}) \leq e^{-I_T(X_2; Y|X_1)+\varepsilon T},
$$
and for $i, j \neq 1$,
$$
P(E_{ij}) \leq e^{-I_T(X_1, X_2; Y)+\varepsilon T}.
$$
It then follows that
$$
\hat{P}^{(T)}_e \leq P(\pi^{(T)})+P(E_{11}^c)+ e^{T R_1+ \varepsilon T-I_T(X_1; Y|X_2)}+e^{T R_2+ \varepsilon T-I_T(X_2; Y|X_1)}+e^{T R_1+T R_2+ \varepsilon T-I_T(X_1, X_2; Y)}.
$$
By Lemma~\ref{independent-OUs}, one can choose independent OU processes $X_1, X_2$ such that
$I_T(X_1; Y|X_2)/T \rightarrow (P_1-\eps)/2$, $I_T(X_2; Y|X_1)/T \rightarrow (P_2-\eps)/2$ and $I_T(X_1, X_2; Y)/T \rightarrow (P_1+P_2-2 \eps)$ uniformly in $T$. This implies that with $\eps$ chosen sufficiently small, we have $\hat{P}^{(T)}_e \rightarrow 0$, as $T \rightarrow \infty$. In other words, there exists a sequence of good codes (which may not satisfy the power constraint) with low average error probability. Now, from each of the above codes, we delete the worse half of the codewords (any codeword violating the power constraint will be deleted since it must have error probability $1$). Then, with only slightly decreased transmission rate, the remaining codewords will satisfy the power constraint and will have small maximum error probability (and thus small average error probability $P_e^{(T)}$), which implies that the rate pair $(R_1,R_2)$ is achievable.
\end{proof}

\begin{rem}
The achievability part can be proven alternatively, which will be roughly described as follows: for arbitrarily small $\eps > 0$, by Lemma~\ref{independent-OUs}, one can choose independent OU processes $X_i$ with respective variances $P_i-\eps$, $i=1, 2$, such that $I_T(X_i; Y)/T$ approaches $(P_i-\eps)/2$. Then, a parallel random coding argument with $X_j$, $j \neq i$, being treated as noise at receiver $i$ shows that the rate pair $((P_1-\eps)/2, (P_2-\eps)/2)$ can be approached, which yields the achievability part.
\end{rem}

\section{Gaussian ICs} \label{IC}

Consider the following continuous-time white Gaussian interference channel having no feedback and with $m$ pairs of senders and receivers: for $i=1, 2, \ldots, m$,
\begin{align}  \label{Equation-IC}
Y_i(t) =a_{i1} \int_0^t X_1(s, M_1) ds + a_{i2} \int_0^t X_2(s, M_2) ds + \cdots + a_{im} \int_0^t X_m(s, M_m) ds+ B_i(t), \quad t \geq 0,
\end{align}
where $X_i$ is the continuous channel input from sender $i$, which depends on $M_i$, the message sent from sender $i$, which is independent of all messages from other senders, and $a_{ij} \in \mathbb{R}$, $i, j=1, 2, \ldots, m$, is the channel gain from sender $j$ to receiver $i$, all $B_i(t)$ are (possibly correlated) standard Brownian motions.

For $T, R_1, \ldots, R_m, P_1, \ldots, P_m > 0$, a $(T, (e^{T R_1}, \ldots, e^{T R_m}), (P_1, \ldots, P_m))$-code for the IC (\ref{Equation-IC}) consists of $m$ sets of integers $\mathcal{M}_i=\{1, 2, \ldots, e^{T R_i}\}$, the {\it message alphabet} for user $i$, $i=1, 2, \ldots, m$, and $m$ {\it encoding functions}, $X_i: \mathcal{M}_i \rightarrow C[0, T]$ satisfying the following power constraint: for any $i=1, 2, \ldots, m$, 
\begin{equation} \label{PowerConstraint-IC}
\frac{1}{T} \int_0^T X_i^2(s, M_i) ds \leq P_i,
\end{equation} 
and $m$ {\it decoding functions}, $g_i: C[0, T] \rightarrow \mathcal{M}_i$, $i=1, 2, \ldots, m$.

The average probability of error for the $(T, (e^{T R_1}, \ldots, e^{T R_m}), (P_1, \ldots, P_m))$-code is defined as
$$
\hspace{-1cm} P_e^{(T)}=\frac{1}{e^{T(\sum_{i=1}^m R_i)}} \sum_{(M_1, M_2, \ldots, M_m) \in \mathcal{M}_1 \times \mathcal{M}_2 \times \cdots \times \mathcal{M}_m} P\{g_i(Y_{i, 0}^T) \neq M_i, i=1, 2, \ldots, m~|~(M_1, M_2, \ldots, M_m) \mbox{ sent}\}.
$$
A rate tuple $(R_1, R_2, \ldots, R_m)$ is said to be {\bf achievable} for the IC if there exists a sequence of $(T, (e^{T R_1}, \ldots, e^{T R_m}),(P_1, \ldots, P_m))$-codes with $P_e^{(T)} \rightarrow 0$ as $T \rightarrow \infty$. The {\bf capacity region} of the IC is the closure of the set of all the achievable $(R_1, R_2, \ldots, R_m)$ rate tuples.

The following theorem explicitly characterizes the capacity region of the above IC:
\begin{thm} \label{Theorem-IC}
The capacity region of the continuous-time white Gaussian IC (\ref{Equation-IC}) is
$$
\{(R_1, R_2, \ldots, R_m) \in \mathbb{R}_+^m: R_i \leq a_{ii}^2 P_i/2, \quad i=1, 2, \ldots, m\}.
$$
\end{thm}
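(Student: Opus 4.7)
The plan is to follow the template of the proof of Theorem~\ref{Theorem-MAC}, splitting the argument into a converse (outer bound) and an achievability (inner bound), with the essential difference that each receiver $i$ only cares about its own message $M_i$ and sees interference from the remaining senders.

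\textbf{Converse.} Fix $T$ and a $(T, (e^{T R_1}, \ldots, e^{T R_m}), (P_1, \ldots, P_m))$-code with $P_e^{(T)} \to 0$. Fano's inequality gives $H(M_i \mid Y_{i,0}^T) \leq T R_i P_e^{(T)} + H(P_e^{(T)}) =: T \varepsilon_T$ with $\varepsilon_T \to 0$. Writing $M_{-i} \triangleq \{M_j : j \neq i\}$ and using the mutual independence of the $M_j$'s,
\[
T R_i = H(M_i) = H(M_i \mid M_{-i}) \leq I(M_i; Y_{i,0}^T \mid M_{-i}) + T \varepsilon_T.
\]
Given $M_{-i}$, the interfering drift $\sum_{j \neq i} a_{ij} \int_0^t X_j(s, M_j) ds$ is known and can be subtracted from $Y_i$, reducing the conditional channel to a single-user continuous-time Gaussian channel with gain $a_{ii}$. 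Applying Theorem~$6.2.1$ of~\cite{ih93},
\[
I(M_i; Y_{i,0}^T \mid M_{-i}) = \frac{a_{ii}^2}{2} \int_0^T \E\bigl[(X_i - \hat{X}_i)^2\bigr]\, dt \leq \frac{a_{ii}^2}{2} \int_0^T \E[X_i^2]\, dt \leq \frac{a_{ii}^2 P_i T}{2},
\]
where $\hat{X}_i(t) = \E[X_i(t) \mid Y_{i,0}^T, M_{-i}]$ and the last step uses the power constraint~(\ref{PowerConstraint-IC}). Sending $T \to \infty$ then yields $R_i \leq a_{ii}^2 P_i/2$.

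\textbf{Achievability.} For any $\varepsilon > 0$ and any rate tuple with $R_i < a_{ii}^2 (P_i - \varepsilon)/2$ for all $i$, I build independent OU codebooks in direct analogy with the MAC achievability argument: for each $i$, draw $e^{T R_i}$ independent codewords from an OU process $X_i$ with variance $P_i - \varepsilon$ and a common large mean-reversion rate $a$. Each receiver $i$ performs jointly-typical-set decoding against its own marginal law, \emph{treating} $\sum_{j \neq i} a_{ij} \int X_j\, ds$ \emph{as additional noise}. The key analytic fact---an IC counterpart of Lemma~\ref{independent-OUs}---is that, on taking $a$ large, each $a_{ij} X_j$ has spectral density $\frac{a_{ij}^2 \cdot 2 a P_j}{2\pi(\lambda^2 + a^2)}$ which is negligible relative to the flat unit spectrum of $B_i$, so the effective noise seen at receiver $i$ is spectrally indistinguishable from a standard white Gaussian noise and, via an application of Theorem~$6.4.1$ of~\cite{ih93} to the combined process, $I_T(X_i; Y_i)/T \to a_{ii}^2 (P_i - \varepsilon)/2$ uniformly in $T$. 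Combining this with information stability (Lemma~\ref{information-stability}), ergodicity of the OU processes, and the expurgation step of the MAC proof (to enforce~(\ref{PowerConstraint-IC})), the rate tuple $(R_1, \ldots, R_m)$ is achievable.

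\textbf{Main obstacle.} The nontrivial step is the uniform-in-$T$ convergence $I_T(X_i; Y_i)/T \to a_{ii}^2 (P_i - \varepsilon)/2$ when receiver $i$'s effective noise is a superposition of scaled OU processes and Brownian motion rather than pure white noise. Lemma~\ref{independent-OUs} already handles an analogous statement for the MAC after a single-user reduction, but in the IC setting one must verify that the \emph{joint} spectral behaviour of $\sum_{j \neq i} a_{ij} X_j + B_i$ becomes arbitrarily close to a white Gaussian noise of the appropriate intensity as $a \to \infty$, uniformly over the time horizon. Once this spectral-domain reduction is in place, the remainder of the random-coding argument transports mutatis mutandis from the MAC proof.
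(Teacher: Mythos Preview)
Your proposal is correct and follows essentially the same route as the paper: the converse conditions on the other messages and applies Theorem~6.2.1 of~\cite{ih93} to the resulting single-user channel, and the achievability uses independent OU codebooks with interference treated as noise, exactly as the paper does (the paper only sketches this part and invokes Lemma~\ref{independent-OUs} directly). Two small remarks: in your converse display the conditioning for the exact Duncan-type identity should be causal, $\hat{X}_i(t)=\E[X_i(t)\mid Y_{i,0}^{t},M_{-i}]$, rather than on $Y_{i,0}^{T}$, though your subsequent bound by $\E[X_i^2]$ is unaffected; and the ``main obstacle'' you flag is not a new difficulty---Lemma~\ref{independent-OUs}'s chain-rule argument applies verbatim to each $Y_i$ once one replaces $X_j$ by the scaled OU process $a_{ij}X_j$, so the paper simply cites the lemma without further work.
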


\begin{proof}
For notational convenience only, we only prove the case when $n=2$; the case when $n$ is generic being similar.

{\bf The converse part.} In this part, we will show that for any sequence of $(T, (e^{T R_1}, e^{T R_2}), (P_1, P_2))$ codes with
$P_e^{(T)} \rightarrow 0$, the rate pair $(R_1, R_2)$ will have to satisfy
\begin{equation} \label{less-than-squared}
R_1 \leq a_{11}^2 P_1/2, \quad R_2 \leq a_{22}^2 P_2/2.
\end{equation}

Fix $T$ and consider the above-mentioned $(T, (e^{T R_1}, e^{T R_2}), (P_1, P_2))$ code. By the code construction, for $i=1, 2$, it is possible to estimate the messages $M_i$ from the channel output $Y_{i, 0}^T$ with an arbitrarily low probability of error. Hence, by Fano's inequality, for $i=1, 2$,
$$
H(M_i|Y_{i, 0}^T) = T \varepsilon_{i, T},
$$
where $\varepsilon_{i, T} \to 0$ as $T \to \infty$. We then have
$$
T R_1=H(M_1)=H(M_1|M_2)=I(M_1; Y_1|M_2)+H(M_1|M_2, Y_1) \leq I(M_1; Y_1|M_2)+ T \varepsilon_{1, T},
$$
As in the proof of Theorem~\ref{Theorem-MAC}, we have
$$
I(M_1; Y_{1, 0}^T|M_2)= \frac{a^2_{11}}{2} \int_0^T E[(X_1(s)-E[X_1(s)|M_2, Y_{1, 0}^s])^2] ds.
$$
It then follows that
$$
T R_1 \leq \frac{a_{11}^2}{2} \int_0^T E[(X_1(s)-E[X_1(s)|M_2, Y_{1, 0}^s])^2] ds + T \varepsilon_{1, T},
$$
which implies that $R_1 \leq a_{11}^2 P_1/2$. With a parallel argument, one can derive that $R_2 \leq a_{22}^2 P_2/2$. The proof for the converse part is then complete.

{\bf The achievability part.} We only sketch the proof of this part. For arbitrarily small $\eps > 0$, by Lemma~\ref{independent-OUs}, one can choose independent OU processes $X_i$ with respective variances $P_i-\eps$, $i=1, 2$, such that $I_T(X_i; Y)/T$ approaches $a^2_{ii}(P_i-\eps)/2$. Then, a parallel random coding argument as in the proof of Theorem~\ref{Theorem-MAC} with $X_j$, $j \neq i$, being treated as noise at receiver $i$ shows that the rate pair $(a^2_{11}(P_1-\eps)/2, a^2_{22}(P_2-\eps)/2)$ can be approached, which yields the achievability part.
\end{proof}

\section{Gaussian BCs} \label{BC}

In this section, we consider a continuous-time white Gaussian BC with $m$ receivers, which is characterized by: for $i=1, 2, \ldots, m$,
\begin{equation} \label{Equation-BC}
Y_i(t) = \sqrt{snr_i} \int_0^t X(s, M_1, M_2, \ldots, M_m) ds + B_i(t), \quad t \geq 0,
\end{equation}
where $X$ is the continuous channel input, which depends on $M_i$, the message sent from sender $i$, which is uniformly distributed over a finite alphabet $\mathcal{M}_i$ and independent of all messages from other senders, $snr_i$ is the signal-to-noise ratio (SNR) in the channel for user $i$, $B_i(t)$ are (possibly correlated) standard Brownian motions.

For $T, R_1, R_2, \ldots, R_m, P> 0$, a $(T, (e^{T R_1}, \ldots, e^{T R_m}), P)$-code for the BC (\ref{Equation-BC}) consists of $m$ set of integers $\mathcal{M}_i=\{1, 2, \ldots, e^{T R_i}\}$, the {\it message set} for receiver $i$, $i=1, 2, \ldots, m$,  and an {\it encoding function}, $X: \mathcal{M}_1 \times \mathcal{M}_2 \times \cdots \times \mathcal{M}_m \rightarrow C[0, T]$, which satisfies the following power constraint: 
\begin{equation} \label{PowerConstraint-BC}
\frac{1}{T} \int_0^T X^2(s, M_1, M_2, \ldots, M_m) ds \leq P,
\end{equation}
and $m$ {\it decoding functions}, $g_i: C[0, T] \rightarrow \mathcal{M}_i$, $i=1, 2, \ldots, m$.

The average probability of error for the $(T, (e^{T R_1}, e^{T R_2}, \ldots, e^{T R_m}), P)$-code is defined as
$$
\hspace{-1cm} P_e^{(T)}=\frac{1}{e^{T(\sum_{i=1}^m R_i)}} \sum_{(M_1, M_2, \ldots, M_m) \in \mathcal{M}_1 \times \mathcal{M}_2 \times \cdots \times \mathcal{M}_m} P\{g_i(Y_0^T) \neq M_i, i=1, 2, \ldots, m~|~(M_1, M_2, \ldots, M_m) \mbox{ sent}\}.
$$
A rate tuple $(R_1, R_2, \ldots, R_m)$ is said to be {\bf achievable} for the BC if there exists a sequence of $(T, (e^{T R_1}, e^{T R_2}, \ldots, e^{T R_m}), P)$-codes with $P_e^{(T)} \rightarrow 0$ as $T \rightarrow \infty$. The {\bf capacity region} of the BC is the closure of the set of all the achievable $(R_1, R_2, \ldots, R_m)$ rate tuples.

The following theorem explicitly characterizes the capacity region of the above BC:
\begin{thm} \label{Theorem-BC}
The capacity region of the continuous-time white Gaussian BC (\ref{Equation-BC}) is
$$
\left\{(R_1, R_2, \ldots, R_m) \in \mathbb{R}_+^m: \frac{R_1}{snr_1}+\frac{R_2}{snr_2}+\cdots+\frac{R_m}{snr_m} \leq \frac{P}{2}\right\}.
$$
\end{thm}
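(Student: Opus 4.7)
I will treat $m=2$ with $snr_1 \ge snr_2$ in detail; the general case follows the same pattern, as sketched at the end.

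\textbf{Converse.} The plan is to reduce to a physically degraded BC and then run a conditional Duncan-identity computation. Since the capacity region depends only on the marginal laws of $Y_1$ and $Y_2$, I will freely couple the noises and set
\[
\frac{Y_2(t)}{\sqrt{snr_2}} \;=\; \frac{Y_1(t)}{\sqrt{snr_1}} \;+\; W(t),
\]
for an independent Brownian motion $W$ of variance $(1/snr_2 - 1/snr_1)\,t$; this preserves the two marginals and makes $(M_1,M_2)\to Y_1\to Y_2$ a Markov chain. Fano together with $M_1 \perp M_2$ (conditioning reduces entropy) will yield
\[
T R_1 \le I(M_1; Y_{1,0}^T \mid M_2) + T\varepsilon_{1,T}, \qquad T R_2 \le I(M_2; Y_{2,0}^T) + T\varepsilon_{2,T}.
\]
I will then apply the continuous-time Duncan identity used in the converse of Theorem~\ref{Theorem-MAC}, conditionally on $M_2$ for the first bound, and via the chain rule $I(M_2;Y_2)=I(M_1,M_2;Y_2)-I(M_1;Y_2\mid M_2)$ for the second, rewriting both sides as continuous-time MMSE integrals. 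Writing $\hat X_{i,J}(s)=\E[X(s)\mid M_J, Y_{i,0}^s]$ and using $\E[(X-\hat X)^2]=\E[X^2]-\E[\hat X^2]$, the combination $R_1/snr_1 + R_2/snr_2$ will reduce to
\[
\frac{1}{2T}\int_0^T \E[X(s)^2]\,ds \;-\; \frac{1}{2T}\int_0^T \bigl(\E[\hat X_{1,\{2\}}^2] + \E[\hat X_{2,\emptyset}^2] - \E[\hat X_{2,\{2\}}^2]\bigr)\,ds + o(1).
\]
The coupling above gives the conditional Markov property $X(s)\perp Y_{2,0}^s \mid (M_2, Y_{1,0}^s)$, which forces $\E[\hat X_{1,\{2\}}^2]\ge \E[\hat X_{2,\{2\}}^2]$; combined with $\E[\hat X_{2,\emptyset}^2]\ge 0$ and the power constraint, the second integral is nonnegative and I obtain $R_1/snr_1 + R_2/snr_2 \le P/2$. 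For general $m$, I will order $snr_1 \ge \cdots \ge snr_m$, build an iterated noise coupling making $M\to Y_1\to \cdots \to Y_m$ Markov, apply Fano plus independence to get $T R_i \le I(M_i;Y_i\mid M_{>i}) + T\varepsilon_{i,T}$, split each conditional mutual information via chain rule and conditional Duncan into $\tfrac{snr_i}{2}\int(\E[(X-\hat X_{i,>i})^2]-\E[(X-\hat X_{i,\ge i})^2])\,ds$, and telescope the sum using the degradation inequality $\E[\hat X_{i,\{i,\ldots,m\}}^2] \le \E[\hat X_{i-1,\{i,\ldots,m\}}^2]$; only the single term $\tfrac{1}{2T}\int \E[X^2]\,ds \le P/2$ survives.

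\textbf{Achievability.} The plan is to transplant the random-coding argument of Theorem~\ref{Theorem-MAC}. For a fixed probability vector $(\alpha_1,\ldots,\alpha_m)$ on the simplex and a small $\varepsilon>0$, I will choose independent OU processes $X_1,\ldots,X_m$ with respective powers $(\alpha_i-\varepsilon)P$ and large OU parameter $a$ supplied by Lemma~\ref{independent-OUs}, and transmit $X = X_1+\cdots+X_m$, which meets the power constraint in expectation. Each receiver treats $\{X_j\}_{j\ne i}$ as additive noise; rerunning the jointly-typical decoding argument from the proof of Theorem~\ref{Theorem-MAC}, with the Lemma~\ref{independent-OUs} estimate applied to the sub-channel of gain $\sqrt{snr_i}$, will show $I_T(X_i;Y_i)/T \to \alpha_i\,snr_i(P-\varepsilon)/2$ uniformly in $T$, so that every $R_i < \alpha_i\,snr_i(P-\varepsilon)/2$ is achievable; the usual codeword-pruning step will then take care of the sample-path power constraint. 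Letting $\varepsilon\downarrow 0$ and sweeping $(\alpha_1,\ldots,\alpha_m)$ over the simplex traces out the claimed region.

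\textbf{Main obstacle.} The achievability is essentially automatic given the MAC/IC machinery. The substantive work sits in the converse, and most of the risk concentrates in verifying the conditional Markov property $X(s)\perp Y_i^s \mid (M_{\ge i}, Y_{i-1}^s)$ at the process level---this is exactly what powers the degradation MMSE inequality that drives the telescoping. The needed conditional form of the continuous-time Duncan identity is already implicit in the MAC converse, so no new stochastic-calculus tools are required, but the bookkeeping---keeping track of which $\sigma$-algebras and which messages each conditional expectation is taken against, and collapsing the telescoping sum of conditional variances down to a single $\E[X^2]$ integral---is the only step that requires genuine care.
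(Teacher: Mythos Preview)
Your achievability plan is essentially identical to the paper's: superposition of independent OU inputs with power split $(\alpha_i P)$, each receiver treating the others as noise, and Lemma~\ref{independent-OUs} supplying the needed mutual-information limits.

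For the converse, your approach and the paper's agree on the overall skeleton (Fano gives $TR_1\le I(M_1;Y_1\mid M_2)+o(T)$ and $TR_2\le I(M_2;Y_2)+o(T)$, chain-rule $I(M_2;Y_2)=I(M_1,M_2;Y_2)-I(M_1;Y_2\mid M_2)$, then bound $I(M_1,M_2;Y_2)\le \tfrac{snr_2}{2}\int \E[X^2]\,ds$), and both ultimately hinge on the single inequality
\[
\frac{I(M_1;Y_1\mid M_2)}{snr_1}\;\le\;\frac{I(M_1;Y_2\mid M_2)}{snr_2}.
\]
Where you differ is in how this inequality is proved. You introduce an explicit physically-degraded coupling $Y_2/\sqrt{snr_2}=Y_1/\sqrt{snr_1}+W$ and deduce the inequality from the \emph{causal} MMSE comparison $\E[(X-\hat X_{1,\{2\}})^2]\le \E[(X-\hat X_{2,\{2\}})^2]$ (larger filtration $\Rightarrow$ smaller MMSE) via Duncan's formula. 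The paper instead isolates a standalone monotonicity lemma (Lemma~\ref{Xianming-Lemma}: $snr\mapsto I_T(snr)/snr$ is nonincreasing), proved by differentiating and invoking the I--MMSE relation of Guo--Shamai--Verd\'u to compare \emph{causal} with \emph{non-causal} MMSE; no noise coupling is needed because the lemma works directly at the level of the SNR parameter. Your route is slightly more elementary in that it avoids the non-causal I--MMSE derivative and uses only Duncan plus a $\sigma$-algebra inclusion, at the cost of the ``marginals-only'' coupling step; the paper's route packages the key fact as a reusable lemma that applies whenever the same input is observed at two SNR levels. Both are correct and the bookkeeping you flag as the main obstacle is exactly the right place to be careful.
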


To prove Theorem~\ref{Theorem-BC}, we will first need the following lemma.
\begin{lem} \label{Xianming-Lemma}
Consider a continuous-time white Gaussian channel characterized by the following equation
$$
Y(t)=\sqrt{snr} \int_0^t X(s) ds + B(t), \quad t \geq 0,
$$
where $snr \geq 0$ denotes the SNR in the channel and $M$ is the message to be transmitted through the channel. Then, for any fixed $T$, $I_T(M; Y)/snr$ is a monotone decreasing function of $snr$.
\end{lem}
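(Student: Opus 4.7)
The plan is to combine two integral representations of $I_T(M;Y)$ already used earlier in the paper with the elementary observation that non-causal (smoothing) filtering is at least as accurate as causal filtering. Writing $\rho=\sqrt{snr}$, Duncan's theorem in the form quoted just before Theorem~\ref{sampling-input-output} gives
$$
I_T(M;Y)=\frac{snr}{2}\int_0^T \E\!\left[(X(s)-\E[X(s)\mid Y_0^s])^2\right]ds,
$$
so that $I_T(M;Y)/snr$ equals $\tfrac{1}{2}\int_0^T \mathrm{cmmse}(s,snr)\,ds$, where $\mathrm{cmmse}(s,snr)$ denotes the causal (filtering) MMSE at SNR level $snr$. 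On the other hand, the continuous-time I-MMSE identity invoked in the proof of Theorem~\ref{sampling-input-output} reads
$$
\frac{d}{d\rho}I_T(M;Y)=\rho\int_0^T \E\!\left[(X(s)-\E[X(s)\mid Y_0^T])^2\right]ds,
$$
and combining this with the chain rule $d\rho/d(snr)=1/(2\rho)$ yields
$$
\frac{d}{d(snr)}I_T(M;Y)=\frac{1}{2}\int_0^T \mathrm{ncmmse}(s,snr)\,ds,
$$
where $\mathrm{ncmmse}(s,snr)$ denotes the non-causal (smoothing) MMSE.

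With these two expressions in hand, the quotient rule gives
$$
\frac{d}{d(snr)}\!\left(\frac{I_T(M;Y)}{snr}\right)=\frac{1}{snr}\,\frac{dI_T}{d(snr)}-\frac{I_T}{snr^{2}}=\frac{1}{2\,snr}\int_0^T\!\left[\mathrm{ncmmse}(s,snr)-\mathrm{cmmse}(s,snr)\right]ds.
$$
Since $\sigma(Y_0^s)\subseteq\sigma(Y_0^T)$ for every $s\in[0,T]$, the non-causal estimator $\E[X(s)\mid Y_0^T]$ is the $L^2$-projection onto a larger subspace than the causal one $\E[X(s)\mid Y_0^s]$, so $\mathrm{ncmmse}(s,snr)\leq \mathrm{cmmse}(s,snr)$ pointwise in $s$. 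Hence the derivative above is non-positive for every $snr>0$, proving monotonicity.

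The only genuine obstacle I foresee is technical: justifying the differentiation of $I_T$ in $snr$ and the interchange of $d/d(snr)$ with the time integral. This is handled by the same hypotheses used in previous sections (continuity of $X(s,M)$ in $s$, finiteness of the message alphabet, and the average power constraint, which together furnish the Novikov-type integrability). In particular, no regularity beyond what already underlies Duncan's theorem and the continuous-time I-MMSE identity is needed.
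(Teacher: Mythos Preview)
Your proof is correct and is essentially identical to the paper's own argument: the paper also computes $(I_T(snr)/snr)'$ by combining the Duncan/Ihara causal-MMSE representation of $I_T$ with the Guo--Shamai--Verd\'u non-causal I-MMSE derivative, and then invokes the inequality between smoothing and filtering errors. The only cosmetic difference is that the paper differentiates directly in $snr$ (citing \cite{gu05}) rather than passing through $\rho=\sqrt{snr}$.
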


\begin{proof}
For notational convenience, in this proof, we write $I_T(M; Y)$ as $I_T(snr)$.
By Theorem $6.2.1$ in~\cite{ih93}, we have
$$
I_T(snr)=\frac{snr}{2} \int_0^T E[(X(s)-E[X(s)|Y_0^s])^2] ds,
$$
and Theorem $6$ in~\cite{gu05}, we have (the derivative is with respect to $snr$)
$$
I'_T(snr)=\frac{1}{2} \int_0^T E[(X(s)-E[X(s)|Y_0^T])^2] ds.
$$
It then follows that
\begin{align*}
\left(\frac{I_T(snr)}{snr}\right)' &=\frac{1}{snr}\left(I'_T(snr)-\frac{I_T(snr)}{snr}\right) \\
&=\frac{1}{2 snr} \left(\int_0^T E[(X(s)-E[X(s)|Y_0^T])^2] ds-\int_0^T E[(X(s)-E[X(s)|Y_0^s])^2] ds \right) \leq 0,
\end{align*}
which immediately implies the lemma.
\end{proof}

\begin{proof}[Proof of Theorem~\ref{Theorem-BC}]
For notational convenience only, we prove the case when $n=2$, the case when $n$ is generic being parallel.

{\bf The converse part.} Without loss of generality, we assume that
$$
snr_1 \geq snr_2.
$$
We will show that for any sequence of $(T, (e^{T R_1}, e^{T R_2}), P)$ codes with
$P_e^{(T)} \rightarrow 0$ as $T \rightarrow \infty$, the rate pair $(R_1, R_2)$ will have to satisfy
\begin{equation} \label{divided-by-snr}
\frac{R_1}{snr_1}+\frac{R_2}{snr_2} \leq \frac{P}{2}.
\end{equation}

Fix $T$ and consider the above-mentioned $(T, (e^{T R_1}, e^{T R_2}), P)$-code. By the code construction, for $i=1, 2$, it is possible to estimate the messages $M_i$ from the channel output $Y_{i, 0}^T$ with an arbitrarily low probability of error. Hence, by Fano's inequality, for $i=1, 2$,
$$
H(M_i|Y_{i, 0}^T) \leq T R_i P^{(T)}_e +H(P^{(T)}_e) = T \varepsilon_{i, T},
$$
where $\varepsilon_{i, T} \rightarrow 0$ as $T \rightarrow \infty$. It then follows that
\begin{equation} \label{eq-1}
T R_1 = H(M_1) = H(M_1|M_2) \leq I(M_1; Y_{1, 0}^T|M_2) + T \varepsilon_{1, T},
\end{equation}
\begin{equation} \label{eq-2}
T R_2 = H(M_2) \leq I(M_2; Y_{2, 0}^T) + T \varepsilon_{2, T}.
\end{equation}
By the chain rule of mutual information, we have
\begin{equation} \label{eq-3}
I(M_1, M_2; Y_{2, 0}^T)=I(M_2; Y_{2, 0}^T)+I(M_1; Y_{2, 0}^T|M_2) \geq I(M_2; Y_{2, 0}^T) + \frac{snr_2}{snr_1} I(M_1; Y_{1, 0}^T|M_2),
\end{equation}
where, for the inequality above, we have applied Lemma~\ref{Xianming-Lemma}.
Now, by Theorem $6.2.1$ in~\cite{ih93}, we have
$$
I(M_1, M_2; Y_{2, 0}^T) = \frac{snr_2}{2} \int_0^T E[(X(s)-E[X(s)|Y_{2, 0}^s])^2] ds \leq \frac{snr_2}{2} \int_0^T E[X^2(s)] ds,
$$
which, together with (\ref{eq-1}), (\ref{eq-2}), (\ref{eq-3}) and (\ref{PowerConstraint-BC}), immediately implies the converse part.

{\bf The achievability part.} We only sketch the proof of this part. For an arbitrarily small $\eps > 0$, by Theorem $6.4.1$ in~\cite{ih93}, one can choose an OU processes $\tilde{X}$ with variance $P-\eps$, such that $I_T(\tilde{X}; Y_i)/T$ approaches $snr_i (P-\eps)/2$. For any $0 \leq \lambda \leq 1$, let
$$
X(t)=\sqrt{\lambda} X_1(t)+ \sqrt{1-\lambda} X_2(t), \quad t \geq 0,
$$
where $X_1$ and $X_2$ are independent copies of $\tilde{X}$. Then, by a similar argument as in the proof of Lemm~\ref{independent-OUs}, we deduce that $I_T(X_1; Y_1)/T, I_T(X_2; Y_2)/T$ approach $snr_1 \lambda (P-\eps)/2$, $snr_2 (1-\lambda)(P-\eps)/2$, respectively. Then, a parallel random coding argument as in the proof of Theorem~\ref{Theorem-MAC} such that
\begin{itemize}
\item when encoding, $X_i$ only carries the message meant for receiver $i$;
\item when decoding, receiver $i$ treats $X_j$, $j \neq i$, as noise,
\end{itemize}
shows that the rate pair $(snr_1 \lambda (P-\eps)/2, snr_2 (1-\lambda) (P-\eps)/2)$ can be approached, which immediately establishes the achievability part.
\end{proof}

\begin{rem}
For the achievability part, instead of using the power sharing scheme as in the proof, one can also employ the following time sharing scheme: set $X$ to be $X_1$ for $\lambda$ fraction of the time, and $X_2$ for $1-\lambda$ fraction of the time. Then, it is straightforward to check this scheme also achieves the rate pair $(snr_1 \lambda (P-\eps)/2, snr_2 (1-\lambda) (P-\eps)/2)$. This, from a different perspective, echoes the observation in~\cite{la03} that time sharing achieves the capacity region of a white Gaussian BC as the bandwidth limit tends to infinity. 
\end{rem}

\section{Coding Theorems for Repeated Channels} \label{Repeated-Channel}

In this section, we consider the so-called {\em repeated versions}~\cite{ih94} of (\ref{Equation-MAC}), (\ref{Equation-IC}) and (\ref{Equation-BC}) and we prove coding theorems as the number of repeated times tends to infinity. 

We start off with the repeated version of a continuous-time white Gaussian MAC. For a fixed $T_0 > 0$, consider a continuous-time white Gaussian MAC with $m$ users and possible feedback:
\begin{equation}  \label{Equation-Repeated-MAC}
\hspace{-0.5cm} Y(t)=\int_0^t X_1(s, M_1, Y_0^s) ds+\int_0^t X_2(s, M_2, Y_0^s) ds + \cdots +\int_0^t X_m(s, M_m, Y_0^s) ds+B(t), \quad 0 \leq t \leq N T_0,
\end{equation}
where $X_i$ is the continuous channel input from sender $i$, which depends on $M_i$, the message sent from sender $i$, which is independent of all messages from other senders, and possibly on $Y_0^{s}$, the channel output up to time $s$. 

For $N, R_1, R_2, \ldots, R_m > 0$, a $(N, (e^{N R_1}, e^{N R_2}, \ldots, e^{N R_m}), (P_1, \ldots, P_m))$-code for the MAC (\ref{Equation-Repeated-MAC}) consists of a set $S$ of $l$ signal waveforms with $S=\{s_i \in C[0, T_0]: i=1, 2, \ldots, l\}$, $m$ sets of integers $\mathcal{M}_i=\{1, 2, \ldots, e^{N R_i}\}$, the {\it message alphabet} for user $i$, $i=1, 2, \ldots, m$, and $m$ {\it encoding functions}, $X_i: \mathcal{M}_i \rightarrow C[0, N T_0]$ such that for any $i$, $X_{i, j T_0}^{(j+1) T_0} \in S$ (in other words, the waveform $X_i$, when restricted on the interval $[j T_0, (j+1) T_0]$), is from $S$), for all $j=0, 2, \ldots, N-1$ and satisfy the following power constraint: for any $i$, 
\begin{equation}  \label{PowerConstraint-Repeated-MAC}
\frac{1}{N T_0} \int_0^{N T_0} X^2_i(s, M_i, Y_0^s) ds \leq P_i, \quad i = 1, 2, \ldots, m,
\end{equation}
and a {\it decoding function},
$$
g: C[0, N T_0] \rightarrow \mathcal{M}_1 \times \mathcal{M}_2 \times \cdots \times \mathcal{M}_m.
$$
The average probability of error probability for the $(N, (e^{N R_1}, \ldots, e^{N R_m}), (P_1, \ldots, P_m))$-code is defined as
$$
\hspace{-1cm} P_e^{(N)}=\frac{1}{e^{N (\sum_{i=1}^m R_i)}} \sum_{(M_1, M_2, \ldots, M_m) \in \mathcal{M}_1 \times \mathcal{M}_2 \times \cdots \times \mathcal{M}_m} P\{g(Y_0^{N T_0}) \neq (M_1, M_2, \ldots, M_m)~|~(M_1, M_2, \ldots, M_m) \mbox{ sent}\}.
$$
A rate tuple $(R_1, R_2, \ldots, R_m)$ is said to be {\bf achievable} for the MAC if there exists a sequence of $(N, (e^{N R_1}, \ldots, e^{N R_m}), (P_1, \ldots, P_m))$-codes with $P_e^{(N)} \rightarrow 0$ as $N \rightarrow \infty$. The {\bf capacity region} of the MAC is the closure of the set of all the achievable $(R_1, R_2, \ldots, R_m)$ rate tuples.

The following theorem gives an explicit characterization of the capacity region of the above-mentioned repeated channel. The proof of the proof follows a largely parallel argument as in Theorem~\ref{Theorem-MAC}, and thus omitted.
\begin{thm}  \label{Theorem-Repeated-MAC}
Whether there is feedback or not, the capacity region of the above-mentioned continuous-time repeated white Gaussian MAC is
$$
\{(R_1, R_2, \ldots, R_m) \in \mathbb{R}_+^m: R_i \leq P_i T_0/2, \quad i=1, 2, \ldots, m\}.
$$
\end{thm}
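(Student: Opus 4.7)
The plan is to follow, essentially verbatim, the two-part argument used to establish Theorem~\ref{Theorem-MAC}, exploiting the crucial fact that the driving Brownian motion has independent increments across the $N$ sub-intervals $[jT_0,(j+1)T_0]$. Thus the repeated channel is, block by block, a copy of the single-block continuous-time white Gaussian MAC on $[0,T_0]$, and the $N$ blocks are stochastically independent once the inputs have been drawn. The bound $R_i \le P_iT_0/2$ (rather than $P_i/2$ as in Theorem~\ref{Theorem-MAC}) merely reflects that $R_i$ now measures nats per block of length $T_0$ instead of nats per unit time.

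For the converse, I would fix $N$, apply Fano's inequality exactly as in the proof of Theorem~\ref{Theorem-MAC} to obtain $H(M_i\mid Y_0^{NT_0})\le N\eps_{i,N}$ with $\eps_{i,N}\to 0$, and then use message independence together with Theorem~$6.2.1$ of~\cite{ih93} to write
$$
NR_i \le I(M_i;Y_0^{NT_0}\mid M_{\sim i}) + N\eps_{i,N} = \frac{1}{2}\int_0^{NT_0}\E[(X_i(s)-\hat X_i(s))^2]\,ds + N\eps_{i,N},
$$
where $\hat X_i$ is the corresponding conditional expectation of $X_i$. The right-hand side is bounded above by $\tfrac12\int_0^{NT_0}\E[X_i^2(s)]\,ds+N\eps_{i,N}$, and the power constraint (\ref{PowerConstraint-Repeated-MAC}) bounds this further by $NT_0P_i/2+N\eps_{i,N}$. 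Dividing by $N$ and sending $N\to\infty$ yields $R_i\le P_iT_0/2$ for every $i$, with or without feedback.

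For achievability, the new wrinkle is that the signal set $S$ must be finite. I would first invoke Lemma~\ref{independent-OUs} on the interval $[0,T_0]$ to produce, for any $\eps>0$, independent OU processes of variance $P_i-\eps$ whose single-block conditional mutual informations $I_{T_0}(X_i;Y\mid X_{\sim i})$ lie within $\eps T_0$ of $(P_i-\eps)T_0/2$. I would then discretize: choose $l$ sufficiently large along with waveforms $s_1,\dots,s_l\in C[0,T_0]$ and probability distributions $\pi_i$ on $S$ approximating the OU laws well enough that all relevant single-block mutual informations are preserved up to $O(\eps)$. Because the Brownian noise has independent increments, drawing each block's codeword symbol from $S$ i.i.d.\ according to $\pi_i$ makes the $N$ blocks behave as $N$ i.i.d.\ uses of a single-block MAC, and the random-coding / joint-typicality argument from the proof of Theorem~\ref{Theorem-MAC}, applied to the $N$-fold product measures (with information stability holding block by block via independence and Lemma~\ref{information-stability}), establishes every rate tuple with $R_i<(P_i-O(\eps))T_0/2$. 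Sending $\eps\to 0$ closes the region, and feedback offers no extra gain since the converse already bounds $R_i$ by $P_iT_0/2$ unconditionally.

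The main obstacle is the discretization step, i.e.\ showing that the infinite-dimensional OU law on $C[0,T_0]$ can be replaced by a finitely supported distribution on a finite set $S$ without losing more than $O(\eps)$ in the relevant block mutual informations. A natural route is to truncate the Karhunen-Lo\`eve expansion of the OU process to finitely many coordinates and quantize each coordinate on a fine grid, and then invoke continuity (or at least lower semicontinuity) of mutual information under weak convergence of inputs, combined with the explicit Gaussian expressions produced by Theorem~$6.2.1$ of~\cite{ih93}, to control the approximation error. Once this is in hand, the remainder is a direct transcription of the proof of Theorem~\ref{Theorem-MAC} with $T$ replaced by $NT_0$ and time measured in $N$ discrete blocks.
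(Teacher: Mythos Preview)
Your converse is correct and is exactly the argument the paper has in mind (the paper omits the proof, saying only that it ``follows a largely parallel argument as in Theorem~\ref{Theorem-MAC}'').

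For achievability, however, you are working harder than necessary. The discretization of the OU law via Karhunen--Lo\`eve truncation and quantization, while doable, is not needed, because the finite-signal-set requirement is in fact no constraint at all for non-feedback codes. Observe that in the definition of a $(N,(e^{NR_1},\dots,e^{NR_m}),(P_1,\dots,P_m))$-code, the set $S$ and its cardinality $l$ are \emph{part of the code}; nothing prevents $l$ from depending on $N$ or from being as large as you wish. Now take any deterministic non-feedback code on $[0,NT_0]$ (for instance, one produced by the random-coding argument of Theorem~\ref{Theorem-MAC} with $T=NT_0$). Each user has at most $e^{NR_i}$ codewords, each of which restricts to $N$ waveforms in $C[0,T_0]$; hence the total collection of single-block waveforms appearing anywhere in the codebook has at most $\sum_i N e^{NR_i}$ elements. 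Declare $S$ to be this finite collection. The code then satisfies the repeated-channel format automatically.

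Consequently the achievability of Theorem~\ref{Theorem-Repeated-MAC} follows \emph{immediately} from that of Theorem~\ref{Theorem-MAC}: a rate tuple with $R_i<P_iT_0/2$ corresponds to a per-unit-time tuple with $R_i/T_0<P_i/2$, which is achievable over $[0,NT_0]$ by Theorem~\ref{Theorem-MAC}; then define $S$ \emph{post hoc} as above. This is what ``largely parallel'' means here. Your i.i.d.-block scheme with a finitely supported input law would also succeed, but the approximation lemma you flag as the ``main obstacle'' is an obstacle you created for yourself.
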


Similarly as above, (\ref{Equation-IC}) and (\ref{Equation-BC}) have repeated versions and their capacity regions can be defined in a similar fashion as above.

The following theorem explicitly characterizes the capacity region of a repeated IC:
\begin{thm} \label{Theorem-Repeated-IC}
The capacity region of the repeated version of the continuous-time white Gaussian IC (\ref{Equation-IC}) is
$$
\{(R_1, R_2, \ldots, R_m) \in \mathbb{R}_+^m: R_i \leq a_{ii}^2 P_i T_0/2, \quad i=1, 2, \ldots, m\}.
$$
\end{thm}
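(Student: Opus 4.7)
The proof of Theorem~\ref{Theorem-Repeated-IC} runs largely parallel to the proof of Theorem~\ref{Theorem-IC}, with the time horizon $T$ replaced by $NT_0$ and the normalization $1/T$ replaced by $1/N$; this is in exactly the same spirit in which Theorem~\ref{Theorem-Repeated-MAC} inherits its proof from Theorem~\ref{Theorem-MAC}. For notational convenience I would sketch the argument for $m=2$, the generic case being analogous.

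For the converse, fix $N$ and consider any $(N,(e^{NR_1},e^{NR_2}),(P_1,P_2))$-code with $P_e^{(N)}\to 0$. Fano's inequality gives $H(M_i\mid Y_{i,0}^{NT_0})\leq N\varepsilon_{i,N}$ with $\varepsilon_{i,N}\to 0$, so that
$$
NR_1 = H(M_1\mid M_2) \leq I(M_1;Y_{1,0}^{NT_0}\mid M_2) + N\varepsilon_{1,N}.
$$
By Theorem~$6.2.1$ of~\cite{ih93} and the power constraint (\ref{PowerConstraint-Repeated-MAC}),
$$
I(M_1;Y_{1,0}^{NT_0}\mid M_2) = \frac{a_{11}^2}{2}\int_0^{NT_0}\E[(X_1(s)-\E[X_1(s)\mid M_2,Y_{1,0}^s])^2]\,ds \leq \frac{a_{11}^2 P_1 N T_0}{2}.
$$
Dividing by $N$ and letting $N\to\infty$ yields $R_1\leq a_{11}^2 P_1 T_0/2$; the bound $R_2\leq a_{22}^2 P_2 T_0/2$ follows by symmetry.

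For the achievability part, given $\varepsilon>0$ I would apply Lemma~\ref{independent-OUs} on each block $[jT_0,(j+1)T_0]$: take independent OU processes $X_i$ of variance $P_i-\varepsilon$ so that, treating cross interference as noise, $I_{T_0}(X_i;Y_i)$ is within $\varepsilon$ of $a_{ii}^2(P_i-\varepsilon)T_0/2$. To respect the repeated-channel constraint $X_{i,jT_0}^{(j+1)T_0}\in S$, I would quantize the OU paths on $[0,T_0]$ to a finite set $S$ of cardinality $l$, chosen large enough that the induced block input distribution retains mutual information within another $\varepsilon$ of the continuous version (this uses only standard continuity of mutual information under fine quantization of Gaussian processes). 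A block-by-block random coding argument, identical in structure to that in Theorem~\ref{Theorem-MAC} with each receiver $i$ treating $X_j$, $j\neq i$, as noise, then yields the rate tuple $(a_{11}^2(P_1-\varepsilon)T_0/2,\ldots,a_{mm}^2(P_m-\varepsilon)T_0/2)$, and $\varepsilon\downarrow 0$ completes the achievability.

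The only genuinely new element relative to the non-repeated IC proof is the quantization step that produces the finite signal set $S$, and I expect this to be the most technically delicate part; however, since each block is essentially a standalone continuous-time Gaussian channel and only the input distribution needs to be approximated, this obstacle is resolved by a routine approximation argument rather than any new idea, so the heart of the theorem is already contained in Theorem~\ref{Theorem-IC} and Lemma~\ref{independent-OUs}.
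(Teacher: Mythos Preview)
Your proposal is correct and is in the same spirit as the paper's treatment: the paper in fact gives no separate proof of Theorem~\ref{Theorem-Repeated-IC}, relying (as for Theorem~\ref{Theorem-Repeated-MAC}) on the remark that the argument is ``largely parallel'' to that of the corresponding non-repeated result, Theorem~\ref{Theorem-IC}. Your sketch follows exactly this template; if anything, you go further than the paper by explicitly flagging the quantization step needed to fit the OU-based codewords into a finite signal set $S$, a point the paper suppresses entirely.
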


The following theorem explicitly characterizes the capacity region of a repeated BC:
\begin{thm} \label{Theorem-Repeated-BC}
The capacity region of the repeated version of the continuous-time white Gaussian BC (\ref{Equation-BC}) is
$$
\left\{(R_1, R_2, \ldots, R_m) \in \mathbb{R}_+^m: \frac{R_1}{snr_1}+\frac{R_2}{snr_2}+\cdots+\frac{R_m}{snr_m} \leq \frac{P T_0}{2}\right\}.
$$
\end{thm}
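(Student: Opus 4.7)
The plan is to mirror the proof of Theorem~\ref{Theorem-BC} with a few bookkeeping changes adapted to the repeated-channel setting: the observation window becomes $[0,NT_0]$, the natural normalization is per repetition (divide by $N$ rather than by $T$), and the finite-waveform constraint $X_{jT_0}^{(j+1)T_0}\in S$ must be accommodated in the random code construction. Throughout, I would reduce to $m=2$ and assume $snr_1\geq snr_2$ without loss of generality.

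For the converse, I would fix a sequence of $(N,(e^{NR_1},e^{NR_2}),P)$-codes with $P_e^{(N)}\to 0$ and apply Fano's inequality to obtain $H(M_i|Y_{i,0}^{NT_0})\leq N\varepsilon_{i,N}$ with $\varepsilon_{i,N}\to 0$, which yields
\[
NR_1 \leq I(M_1;Y_{1,0}^{NT_0}|M_2) + N\varepsilon_{1,N}, \qquad NR_2 \leq I(M_2;Y_{2,0}^{NT_0}) + N\varepsilon_{2,N}.
\]
Applying Lemma~\ref{Xianming-Lemma} conditionally on $M_2$ (for each fixed $M_2=m$, the induced channel is a standard single-user Brownian-motion channel, so the $I_T(M;Y)/snr$ monotonicity transfers and persists after averaging over $M_2$) gives $I(M_1;Y_{2,0}^{NT_0}|M_2)\geq (snr_2/snr_1)\,I(M_1;Y_{1,0}^{NT_0}|M_2)$. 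Dividing the first display by $snr_1$, the second by $snr_2$, summing, and using the chain rule followed by Theorem $6.2.1$ of~\cite{ih93} together with the power constraint to bound $I(M_1,M_2;Y_{2,0}^{NT_0})\leq snr_2\,NT_0\,P/2$, I would obtain
\[
\frac{R_1}{snr_1}+\frac{R_2}{snr_2} \leq \frac{T_0 P}{2} + \frac{\varepsilon_{1,N}}{snr_1}+\frac{\varepsilon_{2,N}}{snr_2},
\]
and letting $N\to\infty$ completes the converse.

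For the achievability, I would fix $\lambda\in[0,1]$ and a small $\varepsilon>0$, then choose an OU process $\tilde X$ of variance $P-\varepsilon$ whose relaxation rate $a$ is large enough that the single-subinterval mutual information $I_{T_0}(\tilde X;\tilde Y_i)$ is within $\varepsilon$ of $snr_i(P-\varepsilon)T_0/2$; this is exactly the uniform-in-$T$ guarantee of Theorem $6.4.1$ of~\cite{ih93}, applied at $T=T_0$. Setting $X(t)=\sqrt{\lambda}X_1(t)+\sqrt{1-\lambda}X_2(t)$ for independent copies of $\tilde X$, and following the computation in Lemma~\ref{independent-OUs}, the per-subinterval rate attainable at receiver $i$ while treating $X_j$ ($j\neq i$) as additional (Gaussian) noise is within $O(\varepsilon)$ of $snr_i\lambda_i(P-\varepsilon)T_0/2$, where $\lambda_1=\lambda$, $\lambda_2=1-\lambda$. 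Codewords are generated by concatenating $N$ independent realizations of $X_1$ (respectively $X_2$) on successive length-$T_0$ subintervals, and a jointly-typical-decoding argument applied separately at each receiver, mirroring the analysis in the proof of Theorem~\ref{Theorem-MAC} and invoking Lemma~\ref{information-stability} to control the typical-set probability, drives $P_e^{(N)}\to 0$ as $N\to\infty$. Varying $\lambda$ then sweeps out the boundary of the claimed region.

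The main obstacle I anticipate is reconciling the continuum-supported OU distribution with the constraint $|S|<\infty$. I would handle this by the observation that any realization of a random codebook containing $e^{N(R_1+R_2)}$ codewords uses at most $N\,e^{N(R_1+R_2)}$ distinct length-$T_0$ subinterval waveforms, so after extracting a good codebook from the random ensemble (as in the proof of Theorem~\ref{Theorem-MAC}) one may simply set $S$ to be this finite collection, and the admissibility condition is met without any change to the decoder or rate. A minor secondary issue is verifying the hypothesis of Lemma~\ref{information-stability}, but since the input is iid across subintervals and $I_{NT_0}(X;Y)$ grows linearly in $N$, the requirement $I_{NT_0}/(NT_0)^2\to 0$ is immediate. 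All remaining steps transcribe mechanically from the proofs of Theorems~\ref{Theorem-MAC} and~\ref{Theorem-BC}.
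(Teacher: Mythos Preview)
Your proposal is correct and matches the paper's approach: the paper omits the proof of Theorem~\ref{Theorem-Repeated-BC} entirely, stating only (via the remark preceding Theorem~\ref{Theorem-Repeated-MAC} and the sentence introducing the repeated IC and BC) that the arguments parallel those of the corresponding non-repeated theorems. Your write-up is precisely such a parallel of the proof of Theorem~\ref{Theorem-BC}, with the correct bookkeeping change $T\rightsquigarrow NT_0$ and a sensible handling of the finite-signal-set constraint $|S|<\infty$, which the paper does not explicitly address.
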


\section{Conclusions and Future Work}

In this paper, we have proposed to use Brownian motions (instead of white Gaussian noises) to formulate some continuous-time multi-user networks. Such a formulation allows us to carry over the established techniques and tools from the discrete-time setting over to the continuos-time one, and thereby derive explicit characterizations of the infinite bandwidth capacity regions of a continuous-time white Gaussian multiple access channel with/without feedback, a continuous-time white Gaussian interference channel without feedback and a continuous-time white Gaussian broadcast channel without feedback.

While it is proven that for white Gaussian MACs, feedback does not increase the infinite bandwidth capacity region, and it remains to be seen whether feedback will increase the infinite bandwidth capacity regions for white Gaussian ICs, BCs or other channels. 

Note that there exist in-depth studies~\cite{hi74, hi75, ih80, ih90} on continuous-time point-to-point colored Gaussian channels with possible feedback, natural generalizations for white Gaussian channels. In this regard, another possible direction is to look into whether the ideas and techniques in this paper can be applied to more general networks, such as multi-hop channels with more general noises.

The sampling theorems in Section~\ref{time-sampling} have established a continuous-time Gaussian channel as the limit of a sequence of discrete-time ones, in an information-theoretic sense. Another interesting direction is to further quantify the connections between the continuous-time channels and their discrete-time counterparts. For instance, one can examine, perhaps with strengthened conditions, whether Theorems~\ref{sampling-output} and~\ref{sampling-input-output} still hold true for any sequence of $\Delta_n$ with shrinking sampling intervals, that is, $\delta_n(\Delta_n)$ tends to $0$ as $n$ tends to infinity, where for a given $\Delta_n$, $\delta(\Delta_n)\triangleq \min \{t_{n, i}-t_{n, i-1}: i=1, 2, \ldots, n\}$. Also, a more quantitive analysis on how fast these discrete-time channels will ``approach'' the continuous-time one would significantly enhance our understanding of both types of channels. As a byproduct, such a connection may also provide us an alternative way to derive/estimate 
the capacity regions of discrete-time multi-user Gaussian channels, which have largely remained unknown. 
 
\bigskip

{\bf Acknowledgement.} We would like to thank Jun Chen, Young-Han Kim, Tsachy Weissman, Wenyi Zhang for insightful suggestions and comments, and for pointing out relevant references.

\end{document}